\newtheorem{theorem}{Theorem}[section]
\newtheorem{lemma}[theorem]{Lemma}
\newtheorem{definition}{Definition}[section]
\newcommand{\Zb}{{\mathbf Z}}
\newcommand{\ab}{{\mathbf a}}
\newcommand{\gb}{{\mathbf g}}
\newcommand{\pb}{{\mathbf p}}
\newcommand{\qb}{{\mathbf q}}
\newcommand{\rb}{{\mathbf r}}
\newcommand{\xb}{{\mathbf x}}
\newcommand{\zb}{{\mathbf z}}
\begin{document}
%
\title{Compressive MUSIC with Optimized Partial Support   for Joint Sparse Recovery}

\author{\IEEEauthorblockN{Jong Min Kim,
Ok Kyun Lee and
Jong Chul Ye}
\IEEEauthorblockA{Department of Bio and Brain Engineering, KAIST,
Guseongdong, Daejeon, 305-701, Korea\\ Email: franzkim@gmail.com, jong.ye@kaist.ac.kr  }}


\maketitle

\begin{abstract}

The multiple measurement vector (MMV) problem addresses the identification
of unknown input vectors that share common sparse support.  
The MMV problem has been traditionally addressed either by
sensor array signal processing or compressive sensing. However, recent breakthroughs in this area such as compressive MUSIC (CS-MUSIC) or subspace-augumented MUSIC (SA-MUSIC)
 optimally combine the compressive sensing (CS) and array signal processing   such that $k-r$ supports are first found by CS  and the remaining  $r$ supports are determined by a generalized MUSIC criterion, where $k$ and $r$ denote the sparsity and the number of independent snapshots, respectively. Even though such a hybrid approach significantly outperforms the conventional algorithms, 
its performance  heavily depends on the correct identification of  $k-r$ partial support by the compressive sensing step, which often deteriorates the overall performance.
The main contribution of this paper is, therefore, to show that as long as $k-r+1$ correct supports are included in any $k$-sparse CS solution, the optimal $k-r$ partial support can be found using a subspace fitting criterion, significantly improving the overall performance of CS-MUSIC. Furthermore, unlike the single measurement CS counterpart that requires infinite SNR for a perfect support recovery, 
we can derive an information theoretic sufficient condition for the perfect recovery using CS-MUSIC under a {\em finite} SNR scenario.
\end{abstract}
\IEEEpeerreviewmaketitle

\section{Introduction}

One of important areas of compressed sensing research is the so-called
multiple measurement vector problem (MMV)
\cite{chen2006trs,Kim2010CMUSIC,LeeBresler2010,Feng97,KLY2011opt-CSMUSIC,DaviesEldar2010,KCJBY2011,malioutov2005ssr}. The MMV
problem addresses the recovery of a set of sparse signal vectors
that share common non-zero support. 
In MMV, thanks to the common sparse support, it is quite predictable
that the number of recoverable sparsity levels may increase with the
increasing number of measurement vectors. 
However,  the performance
 of the existing MMV compressive sensing algorithms  are not generally satisfactory even for
a noiseless case when a finite number of snapshots are available.

A recent breakthrough in this area has created a new class of algorithms such as compressive MUSIC (CS-MUSIC) proposed by our group \cite{Kim2010CMUSIC} or subspace-augumented MUSIC (SA-MUSIC)  proposed independently \cite{LeeBresler2010}.  Specifically, when the number of targets is $k$, and $r$ independent snapshots are available,  compressive MUSIC  finds $k-r$ targets using a compressive sensing algorithm such as S-OMP or $p$-thresholding, and the remaining $r$ targets are recovered using a generalized MUSIC criterion \cite{Kim2010CMUSIC}.
This hybridization significantly improves the performance of estimating jointly sparse signals and achieves the $l_0$ sparse recovery bound using a finite number of snapshots. 
Furthermore, even if the sparsity level is not known
{\em a priori}, compressive MUSIC can accurately estimate the
sparsity level using the generalized MUSIC criterion. 
In spite of its success, one of the main shortcomings of CS-MUSIC or SA-MUSIC is that the overall performance is heavily dependent upon the success of the first $k-r$ support estimation. This is especially problematic when the measurement is so noisy or the RIP condition for the sensing matrix is so bad that the greedy $k-r$ update step may produce incorrect support estimate.

One of the main contributions of this paper is, therefore, to relax  this  stringent requirement. In particular, the new algorithm requires that 
 $k-r+1$ supports (not in sequential order) out of $k$ support estimation is correct rather than $k-r$ consecutive support estimate are correct.  The location of the unknown $k-r$ true support can be then readily estimated using a {subspace fitting} criterion.  Such optimized partial support estimates can significantly improve the accuracy of the generalized MUSIC step, hence overall performance of compressive MUSIC.

The paradigm shift from early termination of CS algorithm after $k-r$ step to  selecting the correct $k-r$ supports out of $k$-sparse solution by any  CS algorithm is much more significant  and fundamental than just regarded as algorithmic improvement. In particular, by converting the problem as a partial support recovery problem, we can adapt rich information theoretical analysis tools that have been developed for single measurement vector CS (SMV-CS) \cite{ReevesGastpar2008} .
In particular, we can  derive an information theoretic sufficient condition for the perfect recovery of CS-MUSIC under a {\em finite} SNR scenario, which was considered not feasible in the SMV-CS \cite{ReevesGastpar2008} .

\section{Problem Formulation and Mathematical Preliminaries}
\label{sec:formulation}

Throughout the paper, $\xb^i$ and $\xb_j$ correspond to the $i$-th
row and the $j$-th column of matrix $X$, respectively. When $S$ is an
index set, $X^S$, $A_S$ corresponds to a submatrix collecting
corresponding rows of $X$ and columns of $A$, respectively.  The
following canonical MMV formulation is very useful for our analysis.
\begin{definition}[Canonical form MMV \cite{Kim2010CMUSIC}]
Let $m$, $n$ and $r$  be a positive integers ($m<n$) that represents
the number of sensor elements, the ambient space dimension, and the
number of snapshots, respectively. Suppose that we are given a
multiple-measurement vector $B\in\mathbb{R}^{m\times r}$,
$X=[\mathbf{x}_1,\cdots,\mathbf{x}_r]\in\mathbb{R}^{n\times r}$, and
a sensing matrix $A\in \mathbb{R}^{m\times n}$.
A canonical form MMV problem
is given by the following optimization problem:
\begin{eqnarray}\label{eqdefcan_mmv}
{\rm minimize}~~~\|X\|_0\\
{\rm subject~to}~~~B=AX \notag,
\end{eqnarray}
where $\|X\|_0=|{\rm supp}X|$, ${\rm supp}X=\{1\leq i\leq n :
\mathbf{x}^i\neq 0\}$, $\mathbf{x}^i$ is the $i$-th row of $X$, and
the measurement matrix $B$ is full rank, i.e. ${\rm rank}(B)=r\leq
\|X\|_0$.
\end{definition}

Note that the canonical form MMV has the additional
constraint that ${\rm rank}(B)=r \leq \|X\|_0$. This is not
problematic at all since every MMV problem  can be converted into
canonical form using the singular value decomposition \cite{Kim2010CMUSIC}.
Now, the following theorem provides the $l_0$ sparse recovery bound from noiseless measurements.
\begin{theorem}[$l_0$ Bound]\cite{Feng97,chen2006trs,DaviesEldar2010}
Let ${\rm spark}(A)$ denote the smallest number
of linearly dependent columns of $A$. 
Then, $X\in\mathbb{R}^{n\times r}$  is the unique solution of $AX=B$ if and only if
\begin{equation}\label{l0-bound-mmv}
\|X\|_0<  \frac{{\rm spark}(A)+{\rm rank}(B)-1}{2} 
\leq {\rm
spark}(A)-1  \ .
\end{equation}
\end{theorem}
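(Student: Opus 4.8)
The plan is to establish the characterization in the usual two parts --- sufficiency of the bound for uniqueness, and sharpness of the bound --- and to dispose of the second inequality first as a free consequence. Write $k=\|X\|_0$ and $s={\rm spark}(A)$, and read ``$X$ is the unique solution of $AX=B$'' as ``$X$ is the unique minimizer of $\|Z\|_0$ over $AZ=B$'', i.e. the unique solution of the canonical MMV. Because of the canonical-form constraint ${\rm rank}(B)=r\le k$, the first inequality $k<\frac{s+r-1}{2}$ already forces $r<\frac{s+r-1}{2}$, i.e. $r\le s-2$, whence $\frac{s+r-1}{2}\le\frac{2s-3}{2}<s-1$; so the rightmost inequality is automatic, and the content is the equivalence ``$X$ uniquely solves $AX=B$ $\Longleftrightarrow$ $2k<s+r-1$''.

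For sufficiency, suppose $2k\le s+r-2$, let $Z$ be any solution of $AZ=B$ with $\|Z\|_0\le k$, and assume $Z\neq X$; I aim for a contradiction (this shows every $Z\neq X$ has $\|Z\|_0>k$, so $X$ is the strictly sparsest solution). Set $W=X-Z$, so $AW=0$ and ${\rm supp}\,W\subseteq S:={\rm supp}\,X\cup{\rm supp}\,Z$ with $|S|\le 2k$. Two rank facts are needed. (i) Since ${\rm rank}(AX)={\rm rank}(B)=r$ while $X$ has only $r$ columns, ${\rm rank}(X)=r$, and likewise ${\rm rank}(Z)=r$. (ii) As $B$ has full column rank, the only vectors $Xu+Zv$ lying in ${\rm null}(A)$ are those with $v=-u$; hence ${\rm null}(A)$ meets the sum of the column spaces of $X$ and $Z$ in exactly the column space of $W$, and writing $d={\rm rank}([X\ Z])\in\{r,\dots,2r\}$ we get $\rho:={\rm rank}(W)=d-r$, with $\rho\ge 1$ since $Z\neq X$.

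The proof then rests on a ``matrix-spark'' lemma and an overlap estimate, and I expect the overlap estimate to be the crux. The lemma: if $M\neq 0$ and $AM=0$ then $|{\rm supp}\,M|\ge s+{\rm rank}(M)-1$; indeed, with $U={\rm supp}\,M$ and $\mu={\rm rank}(M)$, the columns of $M^U$ span a $\mu$-dimensional subspace of ${\rm null}(A_U)$, so ${\rm rank}(A_U)\le|U|-\mu$, while any $s-1$ columns of $A$ are independent, so ${\rm rank}(A_U)\ge\min(|U|,s-1)$; comparing the two forces $|U|\ge s$ and then $|U|\ge s+\mu-1$. Applied to $W$ this gives $|S|\ge s+\rho-1$. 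The overlap estimate is $|{\rm supp}\,X\cap{\rm supp}\,Z|\ge r-\rho$: on the rows indexed by ${\rm supp}\,X\setminus{\rm supp}\,Z$ the matrix $W$ coincides with $X$, and those rows lie in the $\rho$-dimensional row space of $W$, so ${\rm rank}(X^{{\rm supp}\,X\setminus{\rm supp}\,Z})\le\rho$; since ${\rm rank}(X^{{\rm supp}\,X})={\rm rank}(X)=r$, subadditivity of rank across this row partition gives ${\rm rank}(X^{{\rm supp}\,X\cap{\rm supp}\,Z})\ge r-\rho$, and a matrix of that rank needs at least that many rows. Combining, $|S|=|{\rm supp}\,X|+|{\rm supp}\,Z|-|{\rm supp}\,X\cap{\rm supp}\,Z|\le 2k-(r-\rho)$, so $2k-r+\rho\ge|S|\ge s+\rho-1$, i.e. $2k\ge s+r-1$, contradicting $2k\le s+r-2$. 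Hence no such $Z$ exists and $X$ is the unique solution.

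For the converse I would show the bound is sharp rather than literally necessary for an individual $X$. Fix a minimal linearly dependent set $C$ of $s$ columns of $A$; any $s-1$ of them are independent, and restricting ${\rm null}(A)$ to a suitable superset of $C$ of size $s+r-1$ yields a rank-$r$ matrix $W_0$ with $AW_0=0$ and $|{\rm supp}\,W_0|=s+r-1$, realizing equality in the matrix-spark lemma. Splitting the $s+r-1$ nonzero rows of $W_0$ between two index sets of size at most $k$ --- possible exactly because $s+r-1\le 2k$ --- and adding a common rank-$r$ block supported away from $C$, one produces distinct $X\neq Z$ with $AX=AZ=B$, ${\rm rank}(B)=r$, and $\|X\|_0=k$; thus whenever $k\ge\frac{s+r-1}{2}$ uniqueness can fail, so the strict inequality in the theorem cannot be relaxed. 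The only leftover chores are the routine checks that the added block keeps ${\rm rank}(X)=r$ and that $B$ can be taken full rank.
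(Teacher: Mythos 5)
The paper does not prove this theorem at all --- it is quoted from \cite{Feng97,chen2006trs,DaviesEldar2010} as a known result --- so there is no internal proof to compare against; your attempt has to stand on its own. The sufficiency half does: your ``matrix-spark'' lemma ($AM=0$, $M\neq 0$ implies $|{\rm supp}\,M|\geq {\rm spark}(A)+{\rm rank}(M)-1$), the identity ${\rm rank}(W)={\rm rank}([X\ Z])-r$ via rank--nullity on $A$ restricted to ${\rm col}([X\ Z])$, and the overlap bound $|{\rm supp}\,X\cap{\rm supp}\,Z|\geq r-{\rm rank}(W)$ are all correct and combine exactly as in the standard Davies--Eldar argument to give $2k\geq {\rm spark}(A)+r-1$, the desired contradiction. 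Your decision to read the ``only if'' as sharpness rather than pointwise necessity is also the right call (pointwise necessity is false: a generic $X$ violating the bound can still be the unique sparsest solution when ${\rm null}(A)$ is small).

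The genuine gap is in the sharpness construction. First, the object you posit --- a rank-$r$ matrix $W_0$ with $AW_0=0$ and $|{\rm supp}\,W_0|=s+r-1$ --- need not exist for a general $A$ of spark $s$: if the only small-support dependency of $A$ is the single minimal circuit $C$, then ${\rm null}(A_D)$ for $D\supset C$ with $|D|=s+r-1$ can be one-dimensional, so no rank-$r$ null matrix lives on $s+r-1$ rows. Second, even granting $W_0$, ``adding a common rank-$r$ block supported away from $C$'' adds \emph{new} rows to both supports and pushes $\|X\|_0$ and $\|Z\|_0$ past $k$, destroying the count $s+r-1\leq 2k$. The repair is to run your own equality analysis backwards with $\rho={\rm rank}(W)=1$ rather than $\rho=r$: take the single null vector $v$ on the minimal circuit $C$, $|C|=s$, partition $C=T_1\sqcup T_0\sqcup T_2$ with $|T_0|=r-1$, and set $X$ supported on $T_1\cup T_0$ and $Z$ on $T_0\cup T_2$ with $X-Z=vg^{*}$; the overlap $T_0$ (inside $C$, not away from it) carries a rank-$(r-1)$ block $E$ with $g\notin{\rm rowspace}(E)$, so ${\rm rank}(B)=r$ because $|T_1\cup T_0|\leq s-1$ makes $A_{T_1\cup T_0}$ injective. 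Then $\|X\|_0+\|Z\|_0=|C|+|T_0|=s+r-1$, and the two sizes can be balanced to be $\leq k$ precisely when $k\geq\lceil(s+r-1)/2\rceil$, which is the sharpness you want.
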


\section{Compressive MUSIC}\label{sec:review}

%
%
Consider a canonical form MMV problem. 
Suppose, furthermore, that
the columns of a sensing matrix $A\in\mathbb{R}^{m\times n}$ are in
general position; that is, any collection of $m$ columns of $A$ are
linearly independent. Then, according to \cite{Feng97,schmidt1986multiple},  for any $j\in \{1,\cdots,n\}$, $j\in {\rm
supp}X$ if and only if
\begin{equation}\label{music-cond}
Q^{*}\mathbf{a}_j=0,
\end{equation}
where $Q\in\mathbb{R}^{m\times (m-r)}$  consists of orthonormal
columns such that $Q^{*}B=0$ so that $R(Q)^{\perp}=R(B)$, which is
often called ``noise subspace''.
%

Note that the MUSIC criterion \eqref{music-cond} holds for all $m\geq
k+1$ if the columns of $A$ are in general position. Using the
compressive sensing terminology, this implies that the recoverable
sparsity level by MUSIC (with a probability 1 for the noiseless
measurement case) is given by
\begin{equation}\label{eq:max_music}
    \|X\|_0 < m = {\rm spark}(A)-1,
\end{equation}
where the last equality comes from the definition of the ${\rm spark}$.
Therefore, the $l_0$ bound \eqref{l0-bound-mmv} can be achieved by
MUSIC when $r=k$. However, for any $r<k$, the MUSIC condition
\eqref{music-cond} does not hold. This  is a major drawback of MUSIC
compared to the compressive sensing algorithms that allows perfect
reconstruction with extremely large probability by increasing the
sensor elements $m$. This drawback of the conventional MUSIC can be overcome by the following generalized MUSIC criterion \cite{Kim2010CMUSIC}.

\begin{theorem}\label{com-music} \cite{Kim2010CMUSIC}
Assume that $A\in\mathbb{R}^{m\times n}$, $X\in\mathbb{R}^{n\times
r}$, and $B\in\mathbb{R}^{m\times r}$ satisfy $AX=B$. Furthermore, we assume that
$\|X\|_0=k$ and $A$ satisfies the RIP condition with the left RIP constant $\delta^L_{2k-r+1}<1$.
If we are given $I_{k-r}\subset {\rm supp}X$
with $|I_{k-r}|=k-r$ and $A_{I_{k-r}}\in \mathbb{R}^{m\times(k-r)}$,
which consists of columns whose indices are in $I_{k-r}$, then for
any $j\in \{1,\cdots,n\}\setminus I_{k-r}$,
\begin{equation}\label{eq-comusicmod}
\ab_j^{*}\left[P_{R(Q)}-P_{R(P_{R(Q)}A_{I_{k-r}})}\right]\ab_j=0
\end{equation}
if and only if $j\in {\rm supp}X$.
\end{theorem}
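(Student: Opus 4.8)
The plan is to reduce the claimed equivalence to the purely linear-algebraic fact that, for $j\notin I_{k-r}$, the column $\ab_j$ lies in $R([B\ A_{I_{k-r}}])$ precisely when $j\in{\rm supp}X$, and then to obtain that reduction from the left-RIP hypothesis. First I would note that $M:=P_{R(Q)}-P_{R(P_{R(Q)}A_{I_{k-r}})}$ is itself an orthogonal projection: since $R(P_{R(Q)}A_{I_{k-r}})\subseteq R(Q)$, the difference of the projections onto these nested subspaces is the orthogonal projection onto the orthogonal complement of $R(P_{R(Q)}A_{I_{k-r}})$ inside $R(Q)$, and in particular $M$ is positive semidefinite. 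Hence $\ab_j^{*}M\ab_j=0$ iff $M\ab_j=0$ iff $\ab_j\in\ker M=R(Q)^{\perp}+R(P_{R(Q)}A_{I_{k-r}})$. Using $R(Q)^{\perp}=R(B)$ and the one-line identity $\ab_i=P_{R(Q)}\ab_i+P_{R(B)}\ab_i$ ($i\in I_{k-r}$), which yields both inclusions of $R(B)+R(P_{R(Q)}A_{I_{k-r}})=R(B)+R(A_{I_{k-r}})$, this kernel equals $R([B\ A_{I_{k-r}}])$. So it suffices to prove the reduced statement.

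Next I would extract the consequences of RIP. Write $S={\rm supp}X$, so $|S|=k$. Since $r\le k$ (canonical form), $k+1\le 2k-r+1$, so by monotonicity of the left-RIP constants $\delta^{L}_{k+1}\le\delta^{L}_{2k-r+1}<1$; hence any $k+1$ columns of $A$ are linearly independent, $A_S$ has full column rank, $\dim R(A_S)=k$, and $\ab_j\notin R(A_S)$ whenever $j\notin S$. The ``only if'' direction then follows: $B=A_SX^{S}$ gives $R(B)\subseteq R(A_S)$, and $I_{k-r}\subseteq S$ gives $R(A_{I_{k-r}})\subseteq R(A_S)$, so $R([B\ A_{I_{k-r}}])\subseteq R(A_S)$; thus $\ab_j\in R([B\ A_{I_{k-r}}])$ together with $j\notin{\rm supp}X$ would force $\ab_j\in R(A_S)$ with $j\notin S$, a contradiction.

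For the ``if'' direction, take $j\in S\setminus I_{k-r}=:J$ (so $|J|=r$). Splitting $B=A_{I_{k-r}}X^{I_{k-r}}+A_JX^{J}$ shows $A_JX^{J}\in R([B\ A_{I_{k-r}}])$, hence $R(A_JX^{J})\subseteq R([B\ A_{I_{k-r}}])$. Writing $\ab_j=A_Je_j$ for the appropriate coordinate vector $e_j$, it remains to place $e_j$ in the column space of $X^{J}$, which holds for every $j\in J$ exactly when the $r\times r$ block $X^{J}=X^{S\setminus I_{k-r}}$ is invertible. I expect this to be the real obstacle: ${\rm rank}(B)=r$ only yields ${\rm rank}(X^{S})=r$, not full rank of this particular $r\times r$ sub-block, so the ``if'' direction genuinely needs a general-position assumption on $X$ --- which I would invoke as a standing hypothesis of the model (or state explicitly). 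Granting ${\rm rank}(X^{S\setminus I_{k-r}})=r$, we have $e_j=X^{J}c$ for some $c$, hence $\ab_j=A_JX^{J}c\in R(A_JX^{J})\subseteq R([B\ A_{I_{k-r}}])$.

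Combining the two directions with the reduction from the first paragraph gives $\ab_j^{*}M\ab_j=0\iff\ab_j\in R([B\ A_{I_{k-r}}])\iff j\in{\rm supp}X$ for every $j\notin I_{k-r}$, as claimed. Apart from the invertibility of $X^{S\setminus I_{k-r}}$, the only mildly delicate points are the fact that a difference of projections onto nested subspaces is again a projection and the range identity $R(B)+R(P_{R(Q)}A_{I_{k-r}})=R(B)+R(A_{I_{k-r}})$; both are short, and the rest is bookkeeping once RIP monotonicity supplies the linear independence of columns of $A$.
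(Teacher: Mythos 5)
Your argument is sound, and since the paper states this theorem only with a citation to \cite{Kim2010CMUSIC} and reproduces no proof, there is no in-text argument to compare against line by line; your reduction of the quadratic-form criterion to the membership statement $\ab_j\in R([B~A_{I_{k-r}}])$ (via the observation that $P_{R(Q)}-P_{R(P_{R(Q)}A_{I_{k-r}})}$ is the orthogonal projection onto $\bigl(R(B)+R(A_{I_{k-r}})\bigr)^{\perp}$) is exactly the standard route, and the ``only if'' direction via $\delta^L_{k+1}\le\delta^L_{2k-r+1}<1$ is correct.

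The obstacle you isolate in the ``if'' direction is genuine, and you are right that it is not supplied by the stated hypotheses. Concretely, take $k=3$, $r=2$, $S=\{1,2,3\}$, $I_{k-r}=\{1\}$, and rows $\xb^1=(0,1)$, $\xb^2=(1,0)$, $\xb^3=(2,0)$: then ${\rm rank}(B)=2=r$ so the canonical form holds and $\|X\|_0=3$, yet $B=[\ab_2+2\ab_3~~\ab_1]$, so $R([B~\ab_1])={\rm span}\{\ab_2+2\ab_3,\ab_1\}$ is only two-dimensional and does not contain $\ab_2$; hence \eqref{eq-comusicmod} fails at $j=2\in{\rm supp}X$ no matter how good the RIP of $A$ is. The missing hypothesis is exactly the invertibility of $X^{S\setminus I_{k-r}}$ that you identify, which is equivalent to $R(B)\cap R(A_{I_{k-r}})=\{0\}$, i.e.\ to ${\rm rank}([B~A_{I_{k-r}}])=k$, i.e.\ to $R([B~A_{I_{k-r}}])=R(A_S)$ --- the form in which the condition is usually packaged. (It is also precisely the unjustified step, the asserted identity $R([B~A_{J_{k-r}}])\cap R(A_S)=R(A_S)$, in the paper's own proof of Theorem~\ref{thm-sfcriterion}.) So keep your proof, but promote the nondegeneracy of $X^{S\setminus I_{k-r}}$ from a parenthetical to an explicit hypothesis: it does not follow from ${\rm rank}(B)=r$, which only gives full column rank of the $k\times r$ matrix $X^S$, and it is not a standing assumption anywhere in this paper.
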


Note that when $r=k$, the
condition (\ref{eq-comusicmod}) is the same as the MUSIC criterion
(\ref{music-cond}). By Theorem \ref{com-music}, we can develop
the Compressive MUSIC algorithm, which can be executed by these
processes.
\begin{itemize}
   \item Step 1: Find $k-r$ indices of ${\rm supp}X$ by any MMV compressive sensing algorithms such as 2-thresholding or SOMP.
   \item Step 2: Let $I_{k-r}$ be the set of indices which are taken in Step 1 and $S=I_{k-r}$.
   \item Step 3: For $j\in \{1,\cdots,n\}\setminus I_{k-r}$,  calculate the quantities $\eta(j)=\ab_j^{*}[P_{R(Q)}-P_{R(P_{R(Q)}A_{I_{k-r}})}]\ab_j$ for all $j\notin
   I_{k-r}$.
   \item Step 4:  Make an ascending ordering of $\eta(j)$, $j\notin
   I_{k-r}$ and choose indices that correspond to the first $r$
   elements and put these indices into $S$.
\end{itemize}

In compressive MUSIC, we determine $k-r$ indices of ${\rm supp}X$
with CS-based algorithms such as 2-thresholding or S-OMP, where the
exact identification of $k-r$ indices is a probabilistic matter. After that process,
we recover remaining $r$ indices of ${\rm supp}X$ with a generalized
MUSIC criterion, which is given in Theorem \ref{com-music}, and this reconstruction process is
deterministic. This hybridization makes the compressive MUSIC
applicable for all ranges of $r$, outperforming all the existing
methods.

So far, we introduced the compressive MUSIC algorithm. To analyze the performance of the compressive MUSIC, we find the number of measurements with which we can identify the support of $X$ by using compressive MUSIC with S-OMP. For this purpose, we consider the large system limit so that we assume the following conditions.

\begin{itemize}
\item Let $\rho:=\lim_{n\rightarrow\infty}m(n)/n$ exist. Then we call $\rho$ as the asymptotic under-sampling rate.
\item Let $\epsilon:=\lim_{n\rightarrow\infty}k(n)/n$ exist. Then we call $\epsilon$ as the asymptotic sparsity.
\item Let $\alpha:=\lim_{n\rightarrow\infty}r(n)/k(n)$ exist and $\gamma:=\lim_{n\rightarrow\infty}\sqrt{k/m}.$
\end{itemize}

Now, we may consider two cases according to the number of multiple measurement vectors.
First, we consider the case when the number of multiple measurement vectors are finite fixed number. Conventional compressive sensing (SMV problem) is a kind of this case. Second, we consider the case when $r$ is proportional to $n$. This case includes the conventional MUSIC case.  To analyze S-OMP, we assume that each element of $A$ is i.i.d. Gaussian random variable $\mathcal{N}(0,1/m)$.
In analyzing S-OMP, rather than analyzing the distribution of  $\|\ab_j^{*}P_{R(A_{I_t})}^{\perp}B\|_F^2$ where $I_t$ denotes the set of indices which are chosen in the first $t$ step of S-OMP,  we consider the following version of subspace S-OMP  due to its better performance \cite{DaviesEldar2010,LeeBresler2010}.

\begin{itemize}
\item Step 1 : Initialize $t=0$ and $I_0=\emptyset$.
\item Step 2 : Compute $P^\perp_{R(A_{I_t})}$ which is the projection operator onto the orthogonal complement of the span of $\{\ab_j:j\in I_t\}$.
\item Step 3 : Compute $P^\perp_{R(A_{I_t})}B$ and for all $j=1,\cdots, n$, compute
$\rho(t,j)=\|\ab_j^{*}P_{{R(P^\perp_{R(A_{I_t})}B)}}\|_F^2$.
\item Step 4 : Take $j_t=\arg\max_{j=1,\cdots,n}\rho(t,j)$ and $I_{t+1}=I_t\cup
\{j_t\}$ and if $t<k-r$ return to Step 2.
\item Step 5 : The final estimate of the $k-r$ elements of support is $I_{k-r}$.

\end{itemize}

\begin{theorem}\label{num-mea-somp}
Assume that we have multiple measurements $B=AX$ where each element of $A$ is generated from i.i.d. $\mathcal{N}(0,1/m)$ and $N$ is an additive noise. Then, in the large system limit, with probability 1, we can identify $k-r$ elements of the support of $X$ with subspace S-OMP if we have one of the following conditions :
\begin{itemize}
\item [1.] $r$ is a fixed finite number and
$$m>k(1+\delta)\frac{2\log{(n-k)}}{r}$$
for some $\delta>0$.
\item [2.] $r$ satisfies $\lim_{n\rightarrow\infty}(\log{n})/r=0$, $\lim_{n\rightarrow\infty}r/k=\alpha$ and
$$m>k(1+\delta)^2\left[2-F(\alpha)\right]^2$$
for some $\delta>0$ where
$$F(\alpha)=\frac{1}{\alpha}\int_0^{4t_1(\alpha)^2}xd\lambda_1(x),$$
$d\lambda_1(x)=(\sqrt{(4-x)x})/(2\pi x)$
is the probability measure with support $[0,4]$, $0\leq t_1(\alpha)\leq 1$ satisfies
$\int_0^{4t_1(\alpha)^2}d\lambda_1(x)=\alpha$.
\end{itemize} 
Here, $F(\alpha)$ is an increasing function on $
(0,1]$ such that $F(1)=1$ and $\lim_{\alpha\rightarrow0^{+}}F(\alpha)=0$.
\end{theorem}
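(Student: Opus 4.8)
The plan is to run one induction over the $t=0,1,\dots,k-r-1$ iterations of subspace S-OMP, reducing each iteration to a two-sided comparison of the statistic $\rho(t,j)$ on the unrecovered part of ${\rm supp}X$ versus on the off-support indices; the two cases of the theorem are just the two asymptotic regimes of that comparison.

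\emph{Set-up.} Condition on $A$. A Gaussian $A$ is a.s.\ in general position and a.s.\ satisfies the needed left-RIP bounds in the large-system limit, so we may assume that for every $I\subset{\rm supp}X$ with $|I|=t<k-r$ the space $R(P^\perp_{R(A_I)}B)$ has dimension $r$, and that $\rho(t,j)=0$ for $j\in I$ (since $P^\perp_{R(A_I)}\ab_j=0$). Hence $I_t$ stays inside ${\rm supp}X$ and $k-r$ true support elements are found as soon as, at each step $t$ with $I_t\subset{\rm supp}X$,
\[
\max_{j\in {\rm supp}X\setminus I_t}\rho(t,j)\ >\ \max_{j\notin {\rm supp}X}\rho(t,j),
\]
with failure probability summable over the $\le k$ steps (Borel--Cantelli). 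Writing $U_t$ for an orthonormal basis of $R(P^\perp_{R(A_{I_t})}B)$ and using $U_t^{*}(I-P^\perp_{R(A_{I_t})})=0$, we have $\rho(t,j)=\|U_t^{*}\ab_j\|_F^2=\|U_t^{*}\tilde\ab_j\|^2$ with $\tilde\ab_j:=P^\perp_{R(A_{I_t})}\ab_j$, and the whole comparison lives inside the $(m-t)$-dimensional range of $P^\perp_{R(A_{I_t})}$, where the active columns $\tilde A:=P^\perp_{R(A_{I_t})}A_{{\rm supp}X\setminus I_t}$ form an i.i.d.\ Gaussian block independent of all off-support columns.

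\emph{Off-support side.} For $j\notin{\rm supp}X$, $\ab_j$ is independent of $A_{{\rm supp}X}$ hence of $U_t$, so conditionally $\rho(t,j)=\|U_t^{*}\ab_j\|^2\sim\tfrac1m\chi^2_r$, independent over the $n-k$ such indices. A union bound with the $\chi^2_r$ tail ($\asymp e^{-x/2}$) gives, a.s.\ in the limit, $\max_{j\notin{\rm supp}X}\rho(t,j)=\tfrac{2\log(n-k)}{m}(1+o(1))$ when $r$ is fixed, and $\max_{j\notin{\rm supp}X}\rho(t,j)=\tfrac rm(1+o(1))$ when $r\to\infty$ with $(\log n)/r\to0$ (then $\chi^2_r/r\to1$ and the max over $n-k$ copies fluctuates by only $O(\sqrt{(\log n)/r})$).

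\emph{On-support side --- the crux.} For $j\in S:={\rm supp}X\setminus I_t$, $\ab_j$ is one of the columns that generate $B$, so $\tilde\ab_j$ and $U_t$ are dependent and one must compute. With a thin SVD $\tilde A=\hat U\Sigma V^{*}$ and $U_t=\hat U W$, $W^{*}W=I_r$, $R(W)=R(\Sigma V^{*}X^{S})$, one obtains $\rho(t,j)=\|P_{R(W)}\,\Sigma V^{*}e_j\|^2$ inside $\mathbb{R}^{k-t}$ and the identity $\sum_{j\in S}\rho(t,j)={\rm tr}\big[((X^{S})^{*}GX^{S})^{-1}(X^{S})^{*}G^2X^{S}\big]$ with $G:=\tilde A^{*}\tilde A$; equivalently, isolating the component of $\ab_j$ orthogonal to the other $k-1$ active columns (squared norm concentrating at $1-k/m$) together with its coefficient row $\xb_j$ displays $\rho(t,j)$ as a bounded spectral functional of a Wishart matrix up to terms that concentrate. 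Now invoke the Marchenko--Pastur law for $\Sigma^2$, concentration of Gaussian quadratic forms and of linear spectral statistics, and the fact that $R(X^{S})$ is generic with respect to the Haar eigenbasis of $\Sigma^2$ (so nothing depends on $X$): this pins down the deterministic limit of $\max_{j\in S}\rho(t,j)$, which for fixed $r$ is $\ge\tfrac rk(1-o(1))$ (via $\sum_{j\in S}\rho(t,j)\ge{\rm tr}[P_{R(X^{S})}G]\ge r\lambda_{\min}(G)\to r$ since $k/m\to0$ here), and which for $r\to\infty$ is controlled by the average of the smallest $\alpha$-fraction of the eigenvalues entering $\Sigma^2$, namely $F(\alpha)$ by the choice of $t_1(\alpha)$. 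The binding iteration is $t=0$ (largest active block, smallest effective aspect ratio), which produces the stated thresholds. \textbf{This step is the main obstacle}: getting uniform control over all $t\le k-r-1$ and all $j\in S$ in the presence of the $\ab_j$--$U_t$ dependence, and in the proportional regime extracting the \emph{sharp} constant $[2-F(\alpha)]^2$ rather than merely an order-of-magnitude bound, is where nearly all the work sits; the off-support side and the union bound are routine by comparison.

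\emph{Assembling.} Comparing the two sides: for fixed $r$, $\tfrac rk(1-o(1))>\tfrac{2\log(n-k)}{m}(1+o(1))$ precisely when $m>k(1+\delta)\tfrac{2\log(n-k)}{r}$; for $r\to\infty$, the comparison of the $F(\alpha)$-governed on-support limit with $\tfrac rm$ holds precisely when $m>k(1+\delta)^2[2-F(\alpha)]^2$. The $(1+\delta)$-slack makes the per-step failure probability summable in $n$, so Borel--Cantelli over the $\le k-r$ steps yields $I_{k-r}\subset{\rm supp}X$ with probability $1$. The claimed properties of $F$ follow from $F(\alpha)=\mathbb{E}_{\lambda_1}[x\mid x\le q(\alpha)]$, where $q(\alpha):=4t_1(\alpha)^2$ is the $\alpha$-quantile of $\lambda_1$ and $f:=d\lambda_1/dx$: $\tfrac{d}{dq}\mathbb{E}[x\mid x\le q]=\tfrac{f(q)}{(\int_0^q f)^2}\int_0^q(q-x)f(x)\,dx>0$ gives monotonicity, $F(1)=\int_0^4 x\,d\lambda_1(x)=1$, and $F(0^{+})=0$ since the conditional mean near the left edge $0$ tends to $0$. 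If an additive noise term is present it perturbs $U_t$ only at lower order and does not move these thresholds.
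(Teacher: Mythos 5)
Your proposal follows essentially the same route as the paper's Appendix~A: off-support statistics are i.i.d.\ $\chi^2_r$ maxima (union bound for fixed $r$, a law-of-large-numbers argument when $(\log n)/r\to 0$), and the on-support side is handled by the bound $\max_{j}\rho(t,j)\ge k^{-1}\sum_{j=1}^{r}\sigma_j^2(A_S)$ followed by Marchenko--Pastur asymptotics of the smallest $\alpha$-fraction of the spectrum, which is exactly where $F(\alpha)$ and the quadratic threshold $[2-F(\alpha)]^2$ come from in the paper. The step you flag as the main obstacle is resolved there precisely by the max-$\ge$-average inequality you already wrote down (plus a stochastic-dominance lemma comparing $d\lambda_{\gamma}$ with $d\lambda_1$), so the heavier Wishart-functional machinery and the worry about the $\ab_j$--$U_t$ dependence are not actually needed.
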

\begin{proof}
See Appendix A. 
\end{proof}

By above theorem, the number of measurements for S-OMP shows some different characteristics according to the number of the measurement vectors. First, if we have small number of multiple measurement vectors, then the number of samples for S-OMP is reciprocally proportional to the number of multiple measurement vectors. On the other hand, we have sufficiently large number of snapshots such that $\lim_{n\rightarrow\infty}(\log{n})/r$ is close to 0, then the number of measurements for S-OMP varies from $4k$ to $k$ according to the ratio of $r$ and $k$ so that the $\log{n}$ is not necessary. In particular,  if the number of snapshots approaches the sparsity $k$, then we can identify the indices of ${\rm supp}X$ with only $(1+\delta)k$ where $\delta$ is any small positive number, which is equivalent to the required number of multiple measurement vectors for the success of conventional MUSIC algorithm. 

Furthermore, in \cite{Kim2010CMUSIC}, we developed the analysis for the noisy setting, where we showed that the required SNR for the success of support recovery decreases when the asymptotic ratio of the number of snapshots and the sparsity level (that is, $\lim_{n\rightarrow\infty}r/k$) increases, in the large system limit. This is one of the important advantages of MMV over SMV.

\section{Optimized partial support selection}\label{sec:no}

As discussed before,  we can easily expect that the performance of the compressive MUSIC is very dependent on the selection of $k-r$ correct indices of the support of $X$.  Note that this is a very stringent condition.  In practice, even though the consecutive $k-r$ steps of S-OMP may not be correct, there are chances that among the $k$-sparse solution of S-OMP, part of the supports can be correct.
 Hence, if the estimate of the support of $X$ has at least $k-r$ indices of the support of $X$ and we can identify them, then we can expect that the performance of the compressive MUSIC will be improved. When ${k \choose k-r}$ is small, we may apply the exhaustive search, but if both $k-r$ and $r$ are not small, then the exhaustive search is hard to apply so that we have to find some alternative method to identify the correct indices from the estimate of ${\rm supp}X$.
Indeed, the following subspace fitting criterion can address the problem.
\begin{theorem}\label{thm-sfcriterion}
Assume that we have a canonical MMV model $AX=B$ where $A\in\mathbb{R}^{m\times n}$, $X\in\mathbb{R}^{n\times r}$, $\|X\|_0=k$ and $r<k<m<n$. If there is an index set $I_k\subset \{1,\cdots,n\}$ such that $|I_k|=\min\{k, {\rm spark}(A)-r\}$ and $|I_k\cap {\rm supp}X|\geq k-r+1$, then for any $j\in I_k$,
$j\in {\rm supp}X$ if and only if
\begin{equation}\label{sf-criterion}
P_{Q_{k,j}}\ab_j={\bf 0},
\end{equation}
where $Q_{k,j}$ is the orthogonal complement for $R([B~~A_{I_k\setminus \{j\}}])$, $A_{I_k\setminus \{j\}}$ consists of columns of $A$ whose index belongs to $I_k\setminus \{j\}$ and $P_{R([B~~A_{I_k\setminus \{j\}}])}^{\perp}$ is the orthogonal projection on $R([B~~A_{I_k\setminus \{j\}}])^{\perp}$.
\end{theorem}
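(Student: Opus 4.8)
The plan is to recast \eqref{sf-criterion} as a subspace-membership statement and then handle the two implications separately: the ``only if'' part reduces, after one elementary identity, to the generalized MUSIC criterion of Theorem~\ref{com-music} (or can be proved directly under the usual genericity of the sources), while the ``if'' part follows from a spark-counting argument that is driven precisely by the extra correct index $k-r+1$. First I would fix notation: write $S:={\rm supp}X$, fix $j\in I_k$, and put $J:=I_k\setminus\{j\}$, $J_0:=J\cap S$, $J_1:=J\setminus S$. By the definitions in the statement, $P_{Q_{k,j}}=P_{R([B~A_J])}^{\perp}$, so \eqref{sf-criterion} is equivalent to $\ab_j\in R([B~A_J])$, and I would work with this reformulation. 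I would also record the elementary facts $R(B)=R(A_SX^S)\subseteq R(A_S)$ and $R([B~A_J])=R(B)+R(A_{J_0})+R(A_{J_1})\subseteq R(A_{S\cup J_1})$ (the last step since $J_0\subseteq S$), and note that for $|I_k\cap S|\ge k-r+1$ to be compatible with $|I_k|=\min\{k,{\rm spark}(A)-r\}$ one needs ${\rm spark}(A)>k$, so $A_S$ has full column rank.

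For the ``only if'' direction, suppose $j\in S$; I must show $\ab_j\in R([B~A_J])$. Since $j\in I_k\cap S$, we have $|J_0|=|I_k\cap S|-1\ge k-r$, so I may choose $J'\subseteq J_0\subseteq S$ with $|J'|=k-r$, and it suffices to prove $\ab_j\in R([B~A_{J'}])$. Transporting everything through the injective map $A_S$, this becomes $e_j\in R(X^S)+\mathbb{R}^{J'}$ inside $\mathbb{R}^S$, with $j\in S\setminus J'$ and $|S\setminus J'|=r$; projecting away the coordinates indexed by $J'$ reduces the claim to $e_j$ lying in the column space of the $r\times r$ submatrix $X^{S\setminus J'}$, which is nonsingular under the standing genericity of the sources (every $r$ rows of $X^S$ being linearly independent). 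Equivalently, whenever $\delta^{L}_{2k-r+1}<1$ one may simply invoke Theorem~\ref{com-music} with $I_{k-r}:=J'$, after the one-line identity $P_{R(Q)}-P_{R(P_{R(Q)}A_{I_{k-r}})}=P_{R([B~A_{I_{k-r}}])}^{\perp}$.

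For the ``if'' direction I would argue by contraposition: suppose $j\notin S$, and show $\ab_j\notin R([B~A_J])$. Now $J_0=I_k\cap S$, so $|J_0|\ge k-r+1$, and hence $|J_1|=(|I_k|-1)-|J_0|\le|I_k|-k+r-2$. Consequently $S\cup J_1\cup\{j\}$ is a disjoint union of cardinality at most $k+|J_1|+1\le|I_k|+r-1\le({\rm spark}(A)-r)+r-1={\rm spark}(A)-1$, so the columns of $A$ it indexes are linearly independent. But $\ab_j\in R([B~A_J])\subseteq R(A_{S\cup J_1})$ would make $\{\ab_i:i\in S\cup J_1\cup\{j\}\}$ linearly dependent --- a contradiction. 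Hence $\ab_j\notin R([B~A_J])$, i.e., \eqref{sf-criterion} fails, and together with the previous paragraph this proves the theorem.

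The step requiring the most care is the displayed chain of inequalities ending in ``$\le{\rm spark}(A)-1$''. Had the hypothesis guaranteed only $|I_k\cap{\rm supp}X|\ge k-r$ (as in ordinary CS-MUSIC), the same count would give the useless bound ``$\le{\rm spark}(A)$''; and were $|I_k|$ permitted to exceed ${\rm spark}(A)-r$, the cap would fail as well --- so the bookkeeping shows simultaneously that the threshold $k-r+1$ and the cardinality $\min\{k,{\rm spark}(A)-r\}$ are both tight, and the same ``$+1$'' is exactly what leaves a full $(k-r)$-subset inside $I_k\cap{\rm supp}X$ after $j$ is discarded in the first part. I expect this accounting, rather than any individual computation, to be the crux of the argument; the only other delicate point is keeping track of the non-degeneracy hypotheses --- notably ${\rm spark}(A)>k$, which underlies both the injectivity of $A_S$ and the spark count, together with the genericity of $X^S$ needed for the $r\times r$ block.
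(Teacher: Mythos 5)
Your proof is correct and follows essentially the same route as the paper's: the ``if'' direction is the identical spark count ($|S\cup I_k|\le k+|I_k|-(k-r+1)\le{\rm spark}(A)-1$ forcing a forbidden linear dependence when $j\notin{\rm supp}X$), and the ``only if'' direction rests on the same containment $R(A_S)\subseteq R([B~~A_{J'}])$ for a $(k-r)$-subset $J'$ of correct indices. The only difference is that you make explicit (via the nonsingularity of the $r\times r$ block $X^{S\setminus J'}$, or equivalently the hypotheses of Theorem~\ref{com-music}) the genericity of $X$ that this containment requires, which the paper's proof asserts without comment.
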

\begin{proof}
Assume that $j\in I_k\cap {\rm supp}X$. Then $|(I_k\setminus\{j\})\cap{\rm supp}X|\geq k-r$ so that
$$R([B~~A_{I_k\setminus\{j\}}])\supseteq R([B~~A_{J_{k-r}}])\cap
R(A_S)=R(A_S)$$
where $J_{j,k-r}\subset (I_k\setminus \{j\})\cap S$, $|J_{j,k-r}|=k-r$ and $S={\rm supp}X$.
Since $\ab_j\in R(A_S)$, (\ref{sf-criterion}) holds for $j\in I_k\cap {\rm supp}X$.

To show the converse, assume that (\ref{sf-criterion}) holds for some $j\in I_k$. Then we have $\ab_j\in R([B~~A_{I_k}\setminus \{j\}])$, that is, there some $\pb\in \mathbb{R}^r$ and $\qb\in\mathbb{R}^{|I_k|-1}$ such that
$$\ab_j=B\pb+A_{I_k\setminus \{j\}}\qb=AX\pb+A_{I_k\setminus\{j\}}\qb.$$
Since $|({\rm supp}X)\cup I_k|\leq k+|I_k|-(k-r+1)\leq k+{\rm spark}(A)-r-(k-r+1)={\rm spark}(A)-1,$
if $j\notin {\rm supp}X$, then there is an $\rb\in \mathbb{R}^n\setminus \{0\}$ such that
$\|\rb\|_0<{\rm spark}(A)$ and $A\rb=0$ since $j\notin {\rm supp}X\cup (I_k\setminus \{j\})$. Then, by the definition of spark$(A)$, that is a contradiction so that $j\in {\rm supp}X$ if (\ref{sf-criterion}) holds.
\end{proof}

In particular, if the columns of $A$ are in general position, then we can take index set $I_k$ with $|I_k|=\min\{k,m-r+1\}$. Also, if $A$ has an RIP condition with $\delta_{2k}<1$, then we can take $|I_k|=k$ since $r\leq k$.
Then, Theorem~\ref{thm-sfcriterion} informs us that  we only require the partial support recovery rather than $k-r$ consecutive correct CS step \cite{Kim2010CMUSIC}.
Accordingly, the compressive MUSIC with optimized partial  support is then performed by following procedure.

\begin{itemize}
\item step 1 : Let $S=\emptyset$.
\begin{itemize}
\item If $r<k$, estimate $k$ indices of ${\rm supp}X$ by MMV compressive sensing algorithm.
\item If $r=k$, goto step 5.
\end{itemize}
\item step 2 : Let $I_{k}$ be the set of indices which are taken in step 1.
\item step 3 : For $j\in I_{k}$, calculate the quantities $\zeta(j)=
\|P_{Q_{k,j}}\ab_j\|^2.$
\item step 4 : Make an ascending ordering of $\zeta(j)$, $j\in I_k$ and choose indices that corresponds the first $k-r$ elements and put these indices into $S$.
\item step 5 : For $j\in \{1,\cdots,n\}\setminus S$, calculate the quantities
$\eta(j)=\gb_j^{*}P_{G_{I_{k-r}}}^{\perp}\gb_j$.
\item step 6 : Make an asending ordering of $\eta(j)$, $j\notin S$ and choose indices that correspond to the first $r$ elements and put these indices into $S$.
\end{itemize}
In above algorithm, we require partial correctness of support estimation instead of exactness of $k-r$ consecutive support estimation. Moreover, the step 1 in above algorithm need not to be  greedy so that we can also apply the convex optimization algorithm such as $l_{2,1}$ minimization \cite{malioutov2005ssr} or belief propagation \cite{KCJBY2011}.

So far, we have assumed that the measurement $B$ is without noise. For the case of noisy measurement, $B$ is corrupted so that the optimized partial support  selection is affected by noise. Although we do not discuss the noise sensitivity in this paper, this issue will be investigated in the future works.

\section{Information theoretic analysis for partial support recovery for MMV}
\label{}
From above section, we know that compressive MUSIC with optimized partial support  can bear with  the fractional  distortion  of  support estimate error less than $\alpha$ to guarantee the exact recovery   in the large system limit. Therefore,   in this section, we are interested in finding a sufficient condition such that we can find the estimate for the support with fractional distortion less than $\alpha$ in an MMV step.
Here, we consider the linear model in which the multiple measurement $Y\in\mathbb{R}^{m\times r}$ is given as
$$Y=AX+N$$
where $A\in\mathbb{R}^{m\times n}$ is a sensing matrix and $N\in\mathbb{R}^{m\times r}$ is additive noise whose columns are i.i.d. and have the distribution $\mathcal{N}(0,\sigma_w^2I)$. Also we assume that $X$ has $k$ nonzero rows which are indexed by the set $S$ and that $S$ is distributed uniformly over the ${n \choose k}$ possibilities. Again, we assume that the distributions of each column of $X$ are identical and independent. Furthermore, we assume that the elements of sensing matrix $A$ are randomly given with i.i.d. $\mathcal{N}(0,1/n)$. Here we consider the large system limit. Also, we use the following definition for SNR.

\begin{definition}
For a given multiple signal $X$, the SNR is given by
$${\sf SNR}(X)=\frac{{\sf E}[\|AX\|_F^2]}{{\sf E}[\|N\|_F^2]}=
\frac{\|X\|_F^2}{rn\sigma_w^2}.$$ Also, for a stochastic signal class $\mathcal{X}$, ${\sf SNR}(\mathcal{X})$ is called an asymptotic lower bound on ${\sf SNR}(X)$ if there exists a constant $c>0$ such that
$${\sf P}\{{\sf SNR}(\mathcal{X}(n))\leq {\sf SNR}(X)\}>1-e^{-nc}.$$
\end{definition}

The analysis for partial support recovery use an information theoretic approach which was used in \cite{ReevesGastpar2008} so that we define the following function.
\begin{definition}
    For $p\in [0,1]$ and $u\in [0,1-\epsilon]$, we define
    $$h(\epsilon,\alpha)=\epsilon h(\alpha)+(1-\epsilon)h\left(\frac{\alpha}{1/\epsilon-1}\right),$$
where $h(p)=-p\log{p}-(1-p)\log{(1-p)}$ is the binary entropy function.
\end{definition}

For a fractional distortion $\alpha>0$, we define the fractional partial recovery with distortion rate $\alpha$ by the requirement $d(S,\hat{S})/k\leq \alpha$ where $\hat{S}$ is the estimate for the support of $X$ such that $|\hat{S}|=k$ and $d(S,\hat{S})=|S\setminus \hat{S}|$. If $\alpha>1-\epsilon$, the random guessing estimator $\hat{K}_{RG}$ is asymptotically reliable so that we assume that $\alpha\leq 1-\epsilon$ \cite{ReevesGastpar2008}.

For the analysis, we consider the maximum likelihood (ML) estimator which is given by
$$\hat{S}_{ML}(Y)=\arg\min_{|U|=k}\|P_{R(A_U)}^{\perp}Y\|_F^2$$
where $P_{R(A_U)}^{\perp}$ is the projection operator onto the orthogonal complement of $R(A_U)$. For $k$-sparse multiple input signal $X\in \mathbb{R}^{n\times r}$, we introduce the following term.
\begin{definition}
    Let $\Zb$ correspond to the nonzero rows of $X$ and satisfy $\|\zb^1\|_2\leq \|\zb^2\|_2\leq\cdots\leq\|\zb^k\|_2$. Then, for some $\alpha\leq 1$, we let
    $$g(\alpha,X)=\frac{1}{\alpha\|X\|_F^2}\sum\limits_{i=1}^{[\alpha k]}\|\zb^i\|_2^2.$$
   Also, for a stochastic signal class $\mathcal{X}$, let $g(\alpha,\mathcal{X})$ be the asymptotic lower bound on $g(\alpha,X)$ if there is a constant $c>0$ such that
   $${\sf P}\{g(\alpha,\mathcal{X}(n))\leq g(\alpha,X)\}>1-e^{-nc}.$$
\end{definition}

In \cite{ReevesGastpar2008},  Reeves and Gastpar gave sufficient conditions for partial support recovery for SMV problem using ML estimator. We can extend those results to the MMV problem as the following theorem.
\begin{theorem}\label{ML-suffcond}
For a given signal class $\mathcal{X}$, sparsity $\epsilon\in (0,1)$, undersampling ratio $\rho<1$, the fractional distortion $\alpha\in (0,1-\epsilon)$, the estimator $\hat{S}_{ML}$ is asymptotically reliable if
\begin{equation}\label{par-snr}
{\sf SNR}(\mathcal{X})>\frac{1}{\alpha g(\alpha,\mathcal{X})}
\end{equation}
and
\begin{equation}\label{par-sam}
\rho>\epsilon + \frac{1}{r}\max\limits_{u\in [\alpha,1-\epsilon]}
\frac{2h(\epsilon,u)}{\log{(\gamma(u,\mathcal{X}))}+\gamma(u,\mathcal{X})^{-1}-1}
\end{equation}
where $\gamma(u,\mathcal{X})={\sf SNR}(\mathcal{X})ug(u,\mathcal{X})$.
\end{theorem}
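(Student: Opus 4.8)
The plan is to adapt the union-bound argument of \cite{ReevesGastpar2008} to the Frobenius-norm (MMV) setting. Since the true support $S$ (with $|S|=k$) is itself a feasible point of the ML program, $\hat{S}_{ML}$ can violate the distortion constraint only if some $U$ with $|U|=k$ and $d(S,U)=|S\setminus U|=uk$, $u>\alpha$, satisfies $\|P_{R(A_U)}^\perp Y\|_F^2\le\|P_{R(A_S)}^\perp Y\|_F^2$; hence
$${\sf P}\big(d(S,\hat{S}_{ML})/k>\alpha\big)\le\sum_{u>\alpha}{k\choose uk}{n-k\choose uk}\max_{d(S,U)=uk}{\sf P}\big(\|P_{R(A_U)}^\perp Y\|_F^2\le\|P_{R(A_S)}^\perp Y\|_F^2\big),$$
and in the large system limit ${k\choose uk}{n-k\choose uk}=\exp\!\big(n\,h(\epsilon,u)(1+o(1))\big)$, which is exactly how the function $h(\epsilon,u)$ appears. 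First I would restrict to the high-probability event $\mathcal G$ on which ${\sf SNR}(X)\ge{\sf SNR}(\mathcal X)$ and $g(u,X)\ge g(u,\mathcal X)$ hold for all $O(n)$ values $u=t/k$, together with the standard chi-square and Gaussian concentration estimates; each fails with probability $\le e^{-cn}$ by the definitions of the asymptotic lower bounds, so ${\sf P}(\mathcal G^c)\to0$.

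Fix a bad $U$, set $V=S\cap U$ and $W=A_{S\setminus U}X^{S\setminus U}$, so $A_SX^S=A_VX^V+W$ with $R(A_V)\subseteq R(A_U)$. The part of the signal that $A_U$ cannot explain is $P_{R(A_U)}^\perp A_SX^S=P_{R(A_U)}^\perp W$, and its energy $E:=\|P_{R(A_U)}^\perp W\|_F^2$ is what drives everything. Because $S\setminus U$ collects $uk$ of the nonzero rows of $X$, we always have $\|X^{S\setminus U}\|_F^2\ge\sum_{i=1}^{[uk]}\|\zb^i\|_2^2=u\,g(u,X)\|X\|_F^2$, and since $A_{S\setminus U}$ is independent of $A_U$, a concentration step gives $E\ge(1-o(1))\tfrac{m-k}{n}\|X^{S\setminus U}\|_F^2$; using ${\sf SNR}(X)=\|X\|_F^2/(rn\sigma_w^2)$ this yields, on $\mathcal G$, $E\ge(1-o(1))\,r(m-k)\sigma_w^2\,\gamma(u,\mathcal X)$. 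With $Y=A_SX^S+N$ and the $r$ noise columns i.i.d.\ $\mathcal N(0,\sigma_w^2 I)$, conditionally on $A$ one has $\|P_{R(A_U)}^\perp Y\|_F^2\sim\sigma_w^2\chi^2_{r(m-k)}(E/\sigma_w^2)$, concentrated near $r(m-k)\sigma_w^2+E$, while $\|P_{R(A_S)}^\perp Y\|_F^2=\|P_{R(A_S)}^\perp N\|_F^2\sim\sigma_w^2\chi^2_{r(m-k)}$, concentrated near $r(m-k)\sigma_w^2$; the two are dependent only through their common component on $R(A_{S\cup U})^\perp$, which cancels in the difference. So the bad event is that a noncentral $\chi^2$ with noncentrality $\propto E$ falls below an essentially central one with the same degrees of freedom.

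The core of the proof is a Chernoff/large-deviation bound for this comparison: using the moment generating function of the noncentral $\chi^2$, optimising the exponential tilt, invoking the concentration of $E$ and of $\|P_{R(A_S)}^\perp N\|_F^2$ built into $\mathcal G$, and simplifying the exponent to its clean form, one gets for each bad $U$ at distortion $u$
\begin{equation}\label{perUbound}
{\sf P}\big(\|P_{R(A_U)}^\perp Y\|_F^2\le\|P_{R(A_S)}^\perp Y\|_F^2\,\big|\,\mathcal G\big)\le\exp\!\Big(-\tfrac{r(m-k)}{2}\big(\log\gamma(u,\mathcal X)+\gamma(u,\mathcal X)^{-1}-1\big)(1-o(1))\Big).
\end{equation}
Here \eqref{par-snr} enters: since $u\mapsto u\,g(u,\cdot)$ is nondecreasing, ${\sf SNR}(\mathcal X)>1/(\alpha g(\alpha,\mathcal X))$ gives $\gamma(u,\mathcal X)\ge\gamma(\alpha,\mathcal X)>1$ for all $u\in[\alpha,1-\epsilon]$, so the rate function $\log\gamma+\gamma^{-1}-1$ is strictly positive (and the tilt optimisation legitimate) throughout. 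Multiplying \eqref{perUbound} by the count $\exp(n\,h(\epsilon,u)(1+o(1)))$ and summing over the $O(n)$ values $u\in(\alpha,1-\epsilon]$ — distortions $u>1-\epsilon$ being handled by an easier, analogous estimate, random guessing already being reliable there — the union bound vanishes provided $\tfrac{m-k}{2n}\big(\log\gamma(u,\mathcal X)+\gamma(u,\mathcal X)^{-1}-1\big)>\tfrac1r\,h(\epsilon,u)$ for every such $u$; since $(m-k)/n\to\rho-\epsilon$ this is exactly \eqref{par-sam}, which completes the argument.

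The step I expect to be the real obstacle is establishing \eqref{perUbound} with the right constant: one must bound the joint tail of two \emph{dependent} residual norms, carry the bound through the randomness of $E$ rather than only its mean, keep the exponent clean enough to race against $h(\epsilon,u)$ yet tight enough to deliver the stated sampling threshold, and --- in the regime where $r$ scales with $n$ --- control the spectrum of the relevant difference of random projections by a Marchenko--Pastur-type computation like the one behind Theorem~\ref{num-mea-somp}.
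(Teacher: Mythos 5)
Your proposal reaches the correct exponents and follows the same overall skeleton as the paper's Appendix~B: a union bound over the bad supports stratified by overlap $|S\setminus U|=uk$, the count ${k\choose uk}{n-k\choose uk}\to e^{nh(\epsilon,u)}$ supplying $h(\epsilon,u)$, chi-square large deviations supplying the rate $\tfrac{r(\rho-\epsilon)}{2}\left[\log\gamma+\gamma^{-1}-1\right]$, and the SNR condition \eqref{par-snr} guaranteeing $\gamma(u,\mathcal X)>1$ so that the rate is positive. The two places where you diverge are exactly the places you flag as the "real obstacle," and the paper's proof shows both can be sidestepped. First, rather than directly bounding ${\sf P}(\mathrm{err}(U)\le\mathrm{err}(S))$ for each bad $U$ --- which forces you to confront the dependence between the two residual norms --- the paper introduces an intermediate threshold $t_\nu=(1+\nu)r(m-k)$ and bounds separately ${\sf P}(\mathrm{err}(S)>t_\nu)$ (upper tail of a central $\chi^2_{r(m-k)}$) and ${\sf P}(\min_{U}\mathrm{err}(U)<t_\nu)$ (lower tail), so no joint distribution is ever needed. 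Second, instead of conditioning on $A$ and concentrating the noncentrality $E=\|P_{R(A_U)}^{\perp}A_{S\setminus U}X^{S\setminus U}\|_F^2$ (which in the regime $r\propto n$ would indeed drag in a Marchenko--Pastur-type spectral estimate), the paper reverses the roles: conditioned on ${\sf SNR}(X_{S\setminus U})=\theta$, the Gaussian columns of $A_{S\setminus U}$ make the \emph{signal} term the central Gaussian part, so that $\mathrm{err}(U)/\theta$ is a noncentral $\chi^2_{r(m-k)}$ whose noncentrality is $\Lambda(U)/\theta\ge0$ coming from the noise; monotonicity of the noncentral chi-square CDF in its noncentrality parameter (Lemma A.3 of Reeves--Gastpar) then reduces everything to the central lower-tail bound of Lemma~7.8 with argument $\zeta(u)t=\gamma(u,\mathcal X)^{-1}t$, with no concentration of $E$ required. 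Your route can likely be made to work, but the paper's thresholding-plus-role-reversal buys a substantially shorter argument and avoids precisely the two technical difficulties you identify; the price is the (harmless) slack of the free parameter $\nu$.
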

\begin{proof}
See Appendix~B.
\end{proof}
\bigskip

Note that if $\alpha>1-\epsilon$, the random guessing estimator is asymptotically reliable so that we can identify the support with distortion less than $\alpha$ with large probability in the large system case, by augmenting randomly chosen $k-r$ support in generalized MUSIC step. Moreover, in this case, the sufficient condition becomes $\rho>\epsilon$, which is equivalent to the MUSIC for the full rank measurement.

In addition, in \cite{ReevesGastpar2008}, Reeves and Gastpar gave necessary conditions for partial support recovery for SMV problem. The counterpart for MMV can be given      
by the following theorem. For the proof, see Appendix C. 

\begin{theorem}\label{nec-cond}
For a given stochastic signal class $\mathcal{X}$, sparsity $\epsilon\in (0,1)$, sampling rate $\rho<1$ and fractional distortion $\alpha\in (0,1-\alpha)$, a necessary condition for asymptotically reliable recovery is 
$$\rho>\frac{h(\epsilon)-h(\epsilon,\alpha)+I(X;Y|S)/n}{\sum\limits_{l=1}^r
\frac{1}{2}\log{(1+\frac{1}{\sigma_w^2}\kappa_l(\mathcal{X}))}},$$
where $I(X;Y|K)$ is the mutual information between $X$ and $Y$ conditioned on $S$, and $\kappa_l(\mathcal{X})$ is the asymptotic upper bound for the $l$-th largest eigenvalue of $X^{*}X$, where $X\in \mathcal{X}$.
\end{theorem}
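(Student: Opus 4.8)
The proof follows the information-theoretic converse of Reeves and Gastpar \cite{ReevesGastpar2008}, with the scalar quantities replaced by the matrix analogues forced by the $r$ snapshots. The plan is to lower-bound $I(S;Y)$ by a Fano-type argument, to convert this into a lower bound on $I(X;Y)$ by a chain rule, to upper-bound $I(X;Y)$ by a Gaussian maximum-entropy / random-matrix estimate, and to compare the two.

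\emph{Fano step and chain rule.} Suppose $\hat S=\hat S(Y)$ with $|\hat S|=k$ is asymptotically reliable, i.e. $p_n:={\sf P}\{d(S,\hat S)>\alpha k\}\to0$. The number of $k$-sets $U$ with $|S\setminus U|\le\alpha k$ equals $\sum_{j\le\alpha k}\binom kj\binom{n-k}j$, whose exponential growth rate is, by Stirling with dominant term $j=\lfloor\alpha k\rfloor$, exactly $h(\epsilon,\alpha)$ in the notation above. Since $S$ is uniform, $H(S)=\log\binom nk=n h(\epsilon)+o(n)$, and the list form of Fano's inequality (list $=$ distortion ball of radius $\alpha k$ about $\hat S$) gives $H(S|Y)\le n h(\epsilon,\alpha)+o(n)$, using $p_n\to0$; hence $I(S;Y)\ge n(h(\epsilon)-h(\epsilon,\alpha))-o(n)$. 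Because $S={\rm supp}\,X$ is a deterministic function of $X$, $I(S;Y|X)=0$, so the two expansions of $I(X,S;Y)$ give $I(X;Y)=I(S;Y)+I(X;Y|S)$, and therefore
$$I(X;Y)\ge n\big(h(\epsilon)-h(\epsilon,\alpha)\big)+I(X;Y|S)-o(n).$$

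\emph{Mutual-information upper bound.} Since $A$ is independent of $(X,S)$, $I(X;Y)\le I(X;Y|A)$. Condition on a realization of $A$ in the event, of failure probability $e^{-\Theta(n)}$, on which the singular values of every $m\times k$ column submatrix of $A$ are sharply concentrated, and on the event $\lambda_l(X^{*}X)\le\kappa_l(\mathcal X)$ for $l=1,\dots,r$, which holds with probability $1-e^{-\Theta(n)}$ by the definition of the asymptotic upper bound; the complementary events contribute $o(n)$ to the mutual information. For fixed $A$, the maximum-entropy property of the Gaussian applied to $h(Y|A)$, minus $h(N)=h(Y|X,A)$, gives
$$I(X;Y|A)\le\tfrac12\log\det\!\Big(I+\tfrac1{\sigma_w^2}(I_r\otimes A)\,\Sigma_X\,(I_r\otimes A)^{*}\Big),$$
with $\Sigma_X$ the covariance of ${\rm vec}\,X$. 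The singular value decomposition $X=\sum_l\sigma_l u_l v_l^{*}$ (so ${\rm vec}\,X=\sum_l\sigma_l(v_l\otimes u_l)$ with $\sigma_l^{2}=\lambda_l(X^{*}X)$) decouples the $r$ temporal modes; since each $u_l$ is supported on the $k$-set $S$ and the vectors $Au_l$ are, up to the above $o(1)$ corrections, orthogonal with norm $\sqrt\rho$, the determinant factors across the $m$ measurement rows and the $r$ modes, and $\sigma_l^{2}\le\kappa_l(\mathcal X)$ yields
$$I(X;Y)\le m\sum_{l=1}^{r}\tfrac12\log\!\Big(1+\tfrac1{\sigma_w^2}\kappa_l(\mathcal X)\Big)+o(n).$$

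\emph{Conclusion and the hard part.} Combining the two displays, dividing by $n\sum_{l=1}^{r}\tfrac12\log(1+\kappa_l(\mathcal X)/\sigma_w^2)$, and letting $n\to\infty$ with $m/n\to\rho$ yields the stated inequality; strictness holds because reliability requires $p_n\to0$ rather than merely bounded $p_n$, so the $o(n)$ slack is genuinely positive for large $n$. The main obstacle is the mutual-information upper bound: passing from the Gaussian maximum-entropy determinant, with a random $A$ and a matrix-valued input whose columns are independent only \emph{conditionally on} $S$, to the clean $r$-term parallel-Gaussian-channel expression. One has to (i) make the SVD decoupling compatible with the randomness of the support; (ii) show that the $\sqrt{k/m}$ correction coming from $A_S$ being only an approximate scaled isometry on $k$-dimensional coordinate subspaces is absorbed into $\kappa_l(\mathcal X)$ or the $o(n)$ term; and (iii) verify that every exceptional event --- atypical $A$, atypical $\lambda_l(X^{*}X)$, the Fano failure event --- has probability $e^{-\Theta(n)}$ or $o(1)$ as required so that none of them perturbs the asymptotic rate. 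The combinatorial identification of the distortion-ball growth rate with the paper's $h(\epsilon,\alpha)$ is routine.
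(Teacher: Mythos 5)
Your proposal follows essentially the same route as the paper's Appendix~C: a Fano/counting argument giving $I(S;Y)\geq n(h(\epsilon)-h(\epsilon,\alpha))$, the chain rule $I(X;Y)=I(S;Y)+I(X;Y|S)$ (the paper phrases this via $(S,Z)$ with $Z=X_S$ and data processing through $B=AX$), and a Gaussian maximum-entropy determinant bound on the forward mutual information in terms of the eigenvalues of $X^{*}X$ capped by $\kappa_l(\mathcal X)$. In fact you are somewhat more careful than the paper at the one delicate point: you make explicit where the factor of $m$ (hence $\rho$) enters by decoupling across the $m$ measurement rows, whereas the paper's intermediate bound (its Eq.~(nec-2)) states the determinant bound without that factor even though the final displayed inequality requires it; your flagged ``hard part'' is exactly the step the paper glosses over.
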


\section{Numerical Simulation}
\label{sec:simulation}
We compared the performance of compressive MUSIC with optimized partial support (proposed algorithm), compressive MUSIC (CS-MUSIC), subspace-augmented MUSIC (SA-MUSIC) and S-OMP.  We used S-OMP as a MMV compressive sensing algorithm for various hybrid MMV algorithms.  In order to quantify the performance of each algorithms, the empirical recovery ratio is calculated which is defined as the percentage of correct identification of all supports, and the ratio are averaged for $5000$ simulation results. The simulation parameters are as following: $m=40$, $n=100$, the number of measurement vectors  is $r=9$, and $k=1,2,\cdots,20$, respectively. Each component of the sensing matrix $A$ is generated by i.i.d. Gaussian random variable $\frac{1}{\sqrt{m}}\mathcal{N}(0,1)$ or $\frac{1}{\sqrt{m}}\mathcal{N}(1,1)$ to see the effect of RIP in each algorithms. Gaussian noise of ${\sf SNR}=40dB$ is added to the measurement vector $B$.
In Figure \ref{fig:bound_CMUSIC_SOMP1}, we can observe that the proposed method shows significantly better performance than the original version of compressive MUSIC, SA-MUSIC and S-OMP.  In particular,  the proposed method is more robust to bad RIP of the sensing matrices  such that the performance gain is more prominent.
\begin{figure}[htbp]
 \centerline{
 \epsfig{figure=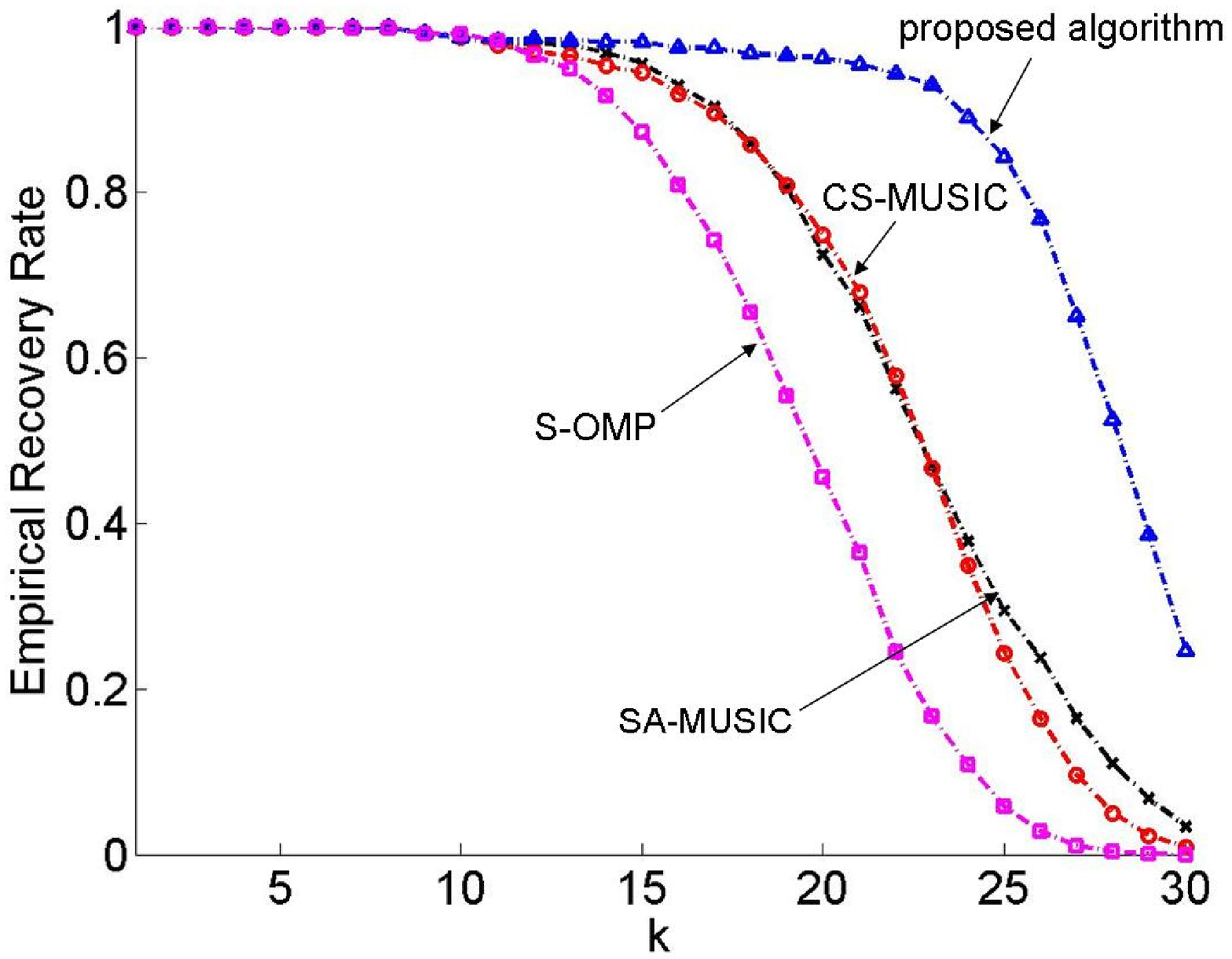,width=5cm}}
 \centerline{ \mbox{(a)}}
 \centerline{\epsfig{figure=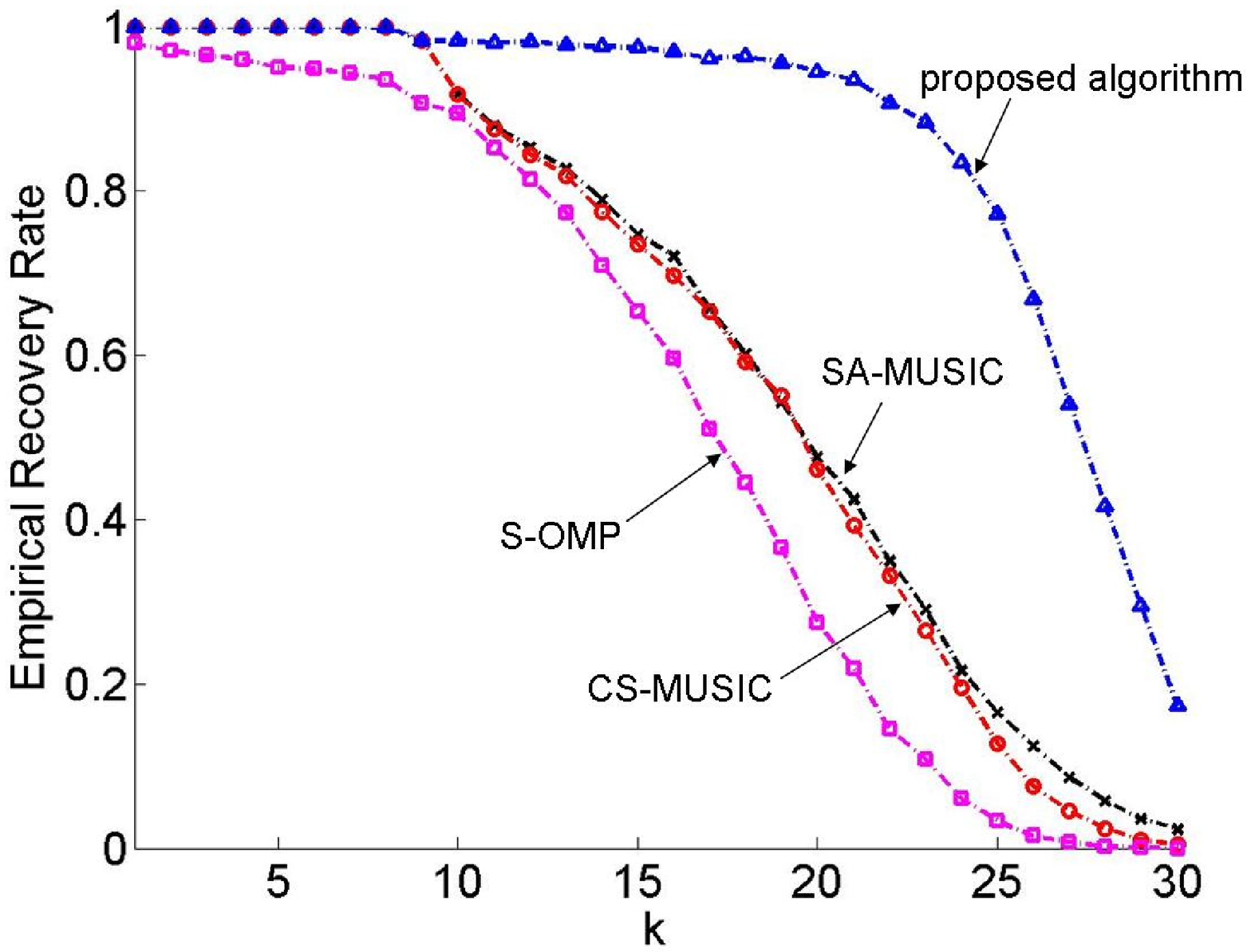,width=5cm}}
  \centerline{ \mbox{(b)}}
  \caption{Recovery rates  when $m=40, r=9$, ${\sf SNR}=40dB$ and $A$ is generated from (a) $\frac{1}{\sqrt{m}}\mathcal{N}(0,1)$ and (b) $\frac{1}{\sqrt{m}}\mathcal{N}(1,1)$.}
 \label{fig:bound_CMUSIC_SOMP1}
\end{figure}

\section{Conclusion}
This paper proposed a mathematical framework for optimized partial support selection to improve the performance of compressive MUSIC for joint sparse recovery. We first discussed about the original compressive MUSIC algorithm, and derived the sharp bound for the number of measurement for exact recovery using subspace S-OMP for partial support recovery. Then, we discussed that the requirement of the correct $k-r$ step S-OMP can be relaxed such that as long as $k-r+1$ supports from $k$-support estimate are correct,  subspace fitting criterion can identify the correct $k-r$ support to improve the robustness of the compressive MUSIC algorithm.  Information theoretical analysis was also provided to obtain a sufficient condition for MMV joint sparse recovery using compressive MUSIC algorithm. As a future work, we will derive the SNR condition for the success of subspace fitting step.


\section*{Acknowledgment}
 This research was  supported by the Korea Science and Engineering Foundation
(KOSEF) grant funded by the Korean government (MEST) (No.
2009-0081089).

\bibliographystyle{plain}
\bibliography{cites,bispl2010,totalbiblio_bispl}

\begin{thebibliography}{10}

\bibitem{chen2006trs}
J.~Chen and X.~Huo.
\newblock {Theoretical results on sparse representations of multiple
  measurement vectors}.
\newblock {\em IEEE Trans. on Signal Processing}, 54(12):4634--4643, 2006.

\bibitem{DaviesEldar2010}
M.E. Davies and Y.C. Eldar.
\newblock {Rank awareness for joint sparse recovery}.
\newblock {\em arXiv:1004.4529}, 2010.

\bibitem{Feng97}
P.~Feng.
\newblock {\em Universal minimum-rate sampling and spectrum-blind
  reconstruction for multiband signals}.
\newblock Dissertation, University of Illinois, Urbana-Champaign, 1997.

\bibitem{fletcher2009necessary}
A.K. Fletcher, S.~Rangan, and V.K. Goyal.
\newblock {Necessary and sufficient conditions for sparsity pattern recovery}.
\newblock {\em IEEE Trans. on Information Theory}, 55(12):5758--5772, 2009.

\bibitem{Kim2010CMUSIC}
J.M. Kim, O.K. Lee, and J.C. Ye.
\newblock Compressive {MUSIC}: a missing link between compressive sensing and
  array signal processing.
\newblock {\em arXiv:1004.4398}, 2010.

\bibitem{KLY2011opt-CSMUSIC}
Jong~Min Kim, Ok~Kyun Lee, and Jong~Chul Ye.
\newblock {Compressive MUSIC with optimized partial support for joint sparse
  recovery}.
\newblock {\em arXiv:1102.3288}, 2011.

\bibitem{KCJBY2011}
Jongmin Kim, Woohyuk Chang, Bangchul Jung, Dror Baron, and Jong~Chul Ye.
\newblock {Belief propagation for joint sparse recovery}.
\newblock {\em arXiv:1102.3289}, 2011.

\bibitem{LeeBresler2010}
K.~Lee and Y.~Bresler.
\newblock {Subspace-augmented MUSIC for joint sparse recovery}.
\newblock {\em arXiv:1004.4371}, 2010.

\bibitem{malioutov2005ssr}
D.~Malioutov, M.~Cetin, and A.S. Willsky.
\newblock {A sparse signal reconstruction perspective for source localization
  with sensor arrays}.
\newblock {\em IEEE Trans. on Signal Processing}, 53(8):3010--3022, 2005.

\bibitem{marcenko1967distribution}
V.A. Mar{\v{c}}enko and L.A. Pastur.
\newblock {Distribution of eigenvalues for some sets of random matrices}.
\newblock {\em Sbornik: Mathematics}, 1(4):457--483, 1967.

\bibitem{Reeves2008}
Galen Reeves.
\newblock {Sparse signal sampling using noisy linear projections}.
\newblock {\em Technical Report No. UCB/EECS-2008-3}, 2008.

\bibitem{ReevesGastpar2008}
Galen Reeves and Michael Gastpar.
\newblock Sampling bounds for sparse support recovery in the presence of noise.
\newblock In {\em IEEE International Symposium on Information Theory}, pages
  2187--2191, Toronto, Canada, July 6-11 2008.

\bibitem{schmidt1986multiple}
R.~Schmidt.
\newblock {Multiple emitter location and signal parameter estimation}.
\newblock {\em IEEE Trans. on Antennas and Propagation}, 34(3):276--280, 1986.

\end{thebibliography}
\newpage



%
\section*{Appendix A: Number of measurements for Compressive MUSIC with subspace S-OMP}
In this section, we assume the large system limit so that we will assume that $\rho$, $\epsilon$, $\alpha$ and $\gamma$ exist. In this section, we will use the following theorem, which gives us the asymptotic distribution of singular values for Gaussian random matrices.
\begin{theorem}\cite{marcenko1967distribution}
Suppose that each entry of $A\in\mathbb{R}^{m\times k}$ is generated from i.i.d. Gaussian random variable $\mathcal{N}(0,1/m)$. Then the probability density of squared singular value of $A$ is given by
\begin{equation}\label{marpas}
d\lambda_{\gamma}(x):=\frac{1}{2\pi\gamma^2}\frac{\sqrt{((1+\gamma)^2-x)(x-(1-\gamma)^2)}}{x}dx.
\end{equation}
\end{theorem}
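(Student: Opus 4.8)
The statement to establish is the classical Marchenko--Pastur law: writing $W=A^{*}A$, this $k\times k$ matrix is $\tfrac1m$ times a real Wishart matrix, and the claim (implicitly in the large system limit $m,k\to\infty$ with $k/m\to\gamma^{2}$) is that the empirical spectral distribution of $W$ converges weakly, almost surely, to the measure with density $d\lambda_{\gamma}$ supported on $[(1-\gamma)^{2},(1+\gamma)^{2}]$. The plan is to prove it by the Stieltjes transform method. I would set $s_{k}(z)=\tfrac1k\operatorname{tr}(W-zI)^{-1}$ for $z\in\mathbb{C}^{+}$; by the Stieltjes inversion formula $\lambda_{\gamma}'(x)=\tfrac1\pi\lim_{\eta\downarrow0}\operatorname{Im}s(x+i\eta)$, it then suffices to identify the pointwise limit $s(z)=\lim_{k}s_{k}(z)$.

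First I would prove concentration: viewing $s_{k}(z)$ through the martingale obtained by revealing the rows of $A$ one at a time, a rank-one resolvent perturbation bound shows each martingale difference is $O(1/k)$ (for $z$ away from $\mathbb{R}$), so Azuma--Hoeffding gives ${\sf P}\{|s_{k}(z)-{\sf E}\,s_{k}(z)|>t\}\le e^{-ckt^{2}}$ --- exactly the exponential-in-$n$ concentration used elsewhere in the paper --- which reduces the problem to the deterministic limit of ${\sf E}\,s_{k}(z)$. Next I would derive the self-consistent equation: using the Schur complement to express a diagonal entry of $(W-zI)^{-1}$ through the resolvent of the corresponding $(k-1)\times(k-1)$ minor and the removed column of $A$ (Gaussian and independent of that minor), and replacing the resulting quadratic form by its normalized trace via concentration of quadratic forms (Hanson--Wright), one obtains in the limit $\gamma^{2}z\,s(z)^{2}+(z+\gamma^{2}-1)\,s(z)+1=0$. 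Selecting the branch with $\operatorname{Im}s(z)>0$ on $\mathbb{C}^{+}$ and $s(z)\sim-1/z$ at $\infty$ gives a closed form whose boundary values on $((1-\gamma)^{2},(1+\gamma)^{2})$ have imaginary part $\dfrac{\sqrt{((1+\gamma)^{2}-x)(x-(1-\gamma)^{2})}}{2\gamma^{2}x}$, since the discriminant factors as $(z+\gamma^{2}-1)^{2}-4\gamma^{2}z=(z-(1-\gamma)^{2})(z-(1+\gamma)^{2})$; dividing by $\pi$ yields exactly $d\lambda_{\gamma}/dx$.

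A fully equivalent alternative is the method of moments: compute ${\sf E}[\tfrac1k\operatorname{tr}W^{p}]$ by expanding the trace as a sum over closed bipartite walks of length $2p$, apply the Gaussian (Wick) formula, and check that only non-crossing pairings survive in the limit, giving the Narayana-type moments $\sum_{j\ge0}\tfrac{1}{j+1}\binom{p}{j}\binom{p-1}{j}\gamma^{2j}$; a variance bound of order $1/k^{2}$ upgrades this to almost-sure convergence of every moment, and compact support makes the limiting measure moment-determinate, so the moments pin it down. I expect the main obstacle in either route to be the passage from the averaged quantity to the almost-sure one together with the control of error terms near the real axis: in the resolvent approach this is the rigorous and uniform justification of replacing quadratic forms by traces, and in the moment approach it is showing that crossing partitions contribute $o(1)$ and that the resulting moment sequence is precisely that of $d\lambda_{\gamma}$. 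Since the paper only invokes this as the known result of \cite{marcenko1967distribution}, citing it suffices for the analysis of subspace S-OMP that follows; the sketch above records the standard derivation.
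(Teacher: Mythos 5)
The paper offers no proof of this statement at all: it is quoted verbatim as the classical Marchenko--Pastur law with a citation to \cite{marcenko1967distribution}, and the subsequent analysis of subspace S-OMP simply uses the limiting measure $d\lambda_\gamma$. Your proposal therefore does not diverge from a proof in the paper so much as supply one where the paper has none, and what you supply is the standard and correct derivation. You are right to read the theorem as an implicitly asymptotic statement (as literally written, for finite $m,k$ the exact eigenvalue density of $\tfrac1m\,\mathrm{Wishart}$ is not $d\lambda_\gamma$; only the empirical spectral distribution of $A^{*}A$ converges to it as $m,k\to\infty$ with $k/m\to\gamma^{2}$, which matches the paper's definition $\gamma=\lim\sqrt{k/m}$). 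Your parameter bookkeeping checks out: with $\lambda=\gamma^{2}$ the self-consistent equation $\gamma^{2}zs^{2}+(z+\gamma^{2}-1)s+1=0$ has discriminant $(z+\gamma^{2}-1)^{2}-4\gamma^{2}z=(z-(1-\gamma)^{2})(z-(1+\gamma)^{2})$, and Stieltjes inversion of the correct branch gives exactly the stated density, including the $\tfrac{1}{2\pi\gamma^{2}x}$ prefactor; the Narayana moment formula $\sum_{j}\tfrac{1}{j+1}\binom{p}{j}\binom{p-1}{j}\gamma^{2j}$ in your alternative route is likewise the right one. The two places you flag as requiring care --- uniform control of the quadratic-form-to-trace replacement near the real axis, and the vanishing of crossing partitions plus moment-determinacy --- are indeed where the technical work lives, and both are handled in the standard references. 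For the purposes of this paper, the citation suffices and your sketch is a faithful record of how the cited result is established; note only that what the appendix actually needs downstream is slightly more than weak convergence of the spectral measure (it integrates $x$ against truncations of $d\lambda_\gamma$ and uses edge behavior of $\sigma_{\min}$, $\sigma_{\max}$), which follows from the Bai--Yin-type extensions of the same circle of results rather than from the bulk law alone.
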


By using above theorem, we prove the Theorem \ref{num-mea-somp}.

{\em Proof of Theorem \ref{num-mea-somp}:} First, for $j\notin {\rm supp}X$, since $\ab_j$ is statistically independent of $P_{R(A_{I_t})}^{\perp}B$. For $t\leq k-r$, the dimension of $P_{R(A_{I_t})}^{\perp}B$ is $r$ so that $m\|\ab_j P_{R(P_{R(A_{I_t})}^{\perp}B)}\|^2$ is of chi-squared distribution of degree of freedom $r$.

On the other hand, for $j\in {\rm supp}X$, we have
\begin{eqnarray*}
\max\limits_{j\in{\rm supp}X}\|\ab_j^{*}P_{R(P_{R(A_{I_t})}^{\perp}B)}\|^2&\geq&
\frac{1}{k}\|A_S^{*}P_{R(P_{R(A_{I_t})}^{\perp}B)}A_S\|_F^2\\
&\geq&\frac{\sum\limits_{j=1}^r \sigma_j^2(A_S)}{k}
\end{eqnarray*}
since $R(P_{R(A_{I_t})}^{\perp}B)\subset R(A_S)$, where $A_S$ have singular values
$0<\sigma_1\leq \sigma_2\leq\cdots\leq \sigma_k$. Then by (\ref{marpas}), we have
\begin{eqnarray}\label{mpd1}
\lim\limits_{n\rightarrow\infty}\frac{\sum\limits_{j=1}^r \sigma_j^2(A_S)}{k}=
\alpha\frac{\int_{(1-\gamma)^2}^{(1-\gamma+2\gamma t_\gamma(\alpha))^2}xd\lambda_{\gamma}(x)}{\alpha}
\end{eqnarray}
where $0\leq t_{\gamma}(\alpha)\leq 1$ is the value satisfying
$$\int_{1-\gamma}^{1-\gamma+2\gamma t_{\gamma}(\alpha)}ds_{\gamma}(x)
=\int_{(1-\gamma)^2}^{(1-\gamma+2\gamma t_{\gamma}(\alpha))^2}d\lambda_{\gamma}(x)=\alpha.$$
If we let 
\begin{eqnarray}
d\lambda_{0,\gamma}(x)=\frac{1}{\pi}\frac{\sqrt{4-s}\sqrt{s}}{\gamma s+(1-\gamma)^2}dx,
\end{eqnarray}
then we have for any $0\leq t\leq 4$,
\begin{eqnarray}
\int_0^t d\lambda_1(x)\leq \int_0^t d\lambda_{0,\gamma}(x)
\end{eqnarray}
then by substitution with $s=(x-(1-\gamma)^2)/\gamma$, we have
\begin{eqnarray}\label{mpd2}
\notag &&\int_{(1-\gamma)^2}^{(1-\gamma+2\gamma t_{\gamma}(\alpha))^2}xd\lambda_{\gamma}(x)\\
\notag &=&\int_0^{\frac{(1-\gamma+2\gamma t_{\gamma}(\alpha))^2-(1-\gamma)^2}{r}}[(1-\gamma)^2+\gamma s]d\lambda_{0,\gamma}(s)\\
\notag &\geq&\int_0^{4t_1(\alpha)^2}[(1-\gamma)^2+\gamma s]d\lambda_1(s)\\
&=&(1-\gamma)^2\alpha+\gamma\int_0^{4t_1(\alpha)^2}sd\lambda_1(s),
\end{eqnarray} 
where the inequality comes from Lemma \ref{lem-t}.
Substituting (\ref{mpd2}) into (\ref {mpd1}), we have
\begin{eqnarray*}
&&\lim\limits_{n\rightarrow\infty}\frac{\sum\limits_{j=1}^r \sigma_j^2(A_S)}{k}\\
&\geq&\alpha\left[(1-\gamma)^2+\gamma \frac{\int_0^{4t_1(\alpha)^2}sd\lambda_1(s)}{\alpha}\right]\\
&=&\frac{r}{m}(1/\gamma-1)^2+\alpha\gamma F(\alpha)
\end{eqnarray*}
where $$F(\alpha):=(1/\alpha)\int_0^{4t_1(\alpha)^2}sd\lambda_1(s)=\frac{\int_0^{4t_1(\alpha)^2}sd\lambda_1(s)}{\int_0^{4t_1(\alpha)^2}d\lambda_1(s)}$$ is an increasing function with respect to $\alpha$ such that $\lim_{\alpha\rightarrow 0}F(\alpha)=0$ and $\alpha(1)=1$.\\
Then we consider two limiting cases according to the number of measurement vectors.\\
{\bf  (Case 1)} For $t\leq k-r$, $\{m\|\ab_j^{*}P_{R(P_{R(A_{I_t})}^{\perp}B)}\|^2:j\notin {\rm supp}X\}$ are independent chi-squared random variables of degree of freedom $r$ so that by Lemma \ref{chi-max}, we have
\begin{equation}\label{gamma1}
\lim_{n\rightarrow\infty}\max\limits_{j\notin {\rm supp}X}\frac{m\|\ab_j^{*}P_
{R(P_{R(A_{I_t})}^{\perp}B)}\|^2}{2\log{(n-k)}}=1.
\end{equation}
Here we assume that
\begin{equation}\label{num-somp-rfixed}
m>k\frac{2(1+\delta)\log{(n-k)}}{r}.
\end{equation}
Then by Mar\'{c}enko-Pastur theorem \cite{marcenko1967distribution},
\begin{eqnarray*}
\lim\limits_{n\rightarrow\infty}\sigma_{\min}(A_S)&=&\lim\limits_{n\rightarrow\infty}
(1-\sqrt{k/m})^2\\&\geq& \lim\limits_{n\rightarrow\infty}\left(1-\sqrt{r/(2\log{(n-k)})}\right)^2=1
\end{eqnarray*}
so that
\begin{eqnarray}\label{j-supp-lower2}
\notag &&\liminf\limits_{n\rightarrow\infty}\max\limits_{j\in{\rm supp}X}\frac{m\|\ab_j^{*}P_{R(P_{R(A_{I_t})}^{\perp}B)}\|^2}{2\log{(n-k)}}\\
\notag&\geq&\liminf\limits_{n\rightarrow\infty}\frac{m}{2\log{(n-k)}}\frac{\sum\limits_{j=1}^r \sigma_{k-j+1}^2(A_S)}{k}\\&\geq& \liminf\limits_{n\rightarrow\infty}\frac{r}{2\log{(n-k)}}
\left(\frac{1}{\gamma}\right)^2\geq 1+\delta.
\end{eqnarray}
Hence, when $r$ is a fixed number, if we have \eqref{num-somp-rfixed}, then we can identify $k-r$ correct indices of ${\rm supp}X$ with subspace S-OMP, in the large system limit.
\\
\indent {\bf (Case 2)} Similarly as in the previous case, for $t< k-r$, $\{m\|\ab_j^{*}P_
{R(P_{R(A_{I_t})}^{\perp}B)}\|^2:j\notin {\rm supp}X\}$ are independent chi-squared  distribution.  Since $\lim_{n\rightarrow\infty}(\log{n})/r=0$, by Lemma 3 in \cite{fletcher2009necessary}, we have
\begin{equation}\label{gamma2}
\lim_{n\rightarrow\infty}\max\limits_{j\notin {\rm supp}X}\frac{m\|\ab_j^{*}P_
{R(P_{R(A_{I_t})}^{\perp}B)}\|^2}{r}=1.
\end{equation}
On the other hand, for $j\in {\rm supp}X$, we have
\begin{eqnarray}\label{omp-supp2}
\notag &&\liminf_{n\rightarrow\infty}\max\limits_{j\in {\rm supp}X}\frac{m\|\ab_j^{*}P_
{R(P_{R(A_{I_t})}^{\perp}Y)}\|^2}{r}\\
&\geq&\left(\frac{1}{\gamma}-1\right)^2+F(\alpha).
\end{eqnarray}
We let
\begin{equation}\label{num-somp2}
m>k(1+\delta)^2\left[2-F(\alpha)\right]^2
\end{equation}
for some $\delta>0$. Note that \eqref{num-somp2} is equivalent to
\begin{equation*}
\frac{1}{\gamma}>(1+\delta)[2-F(\alpha)]
\end{equation*}
Again we let
$$u:=F(\alpha)~{\rm and}~v:=\frac{4}{\alpha}\frac{\kappa(B)+1}{{\sf SNR}_{\min}(B)-1}.$$
Then for a quadratic function $Q(x)=(x-1)^2+ux$, if $x>(1+\delta)(2-u)$, then we have
\begin{eqnarray}\label{aux-ineq2}
\notag Q(x)&=&x^2-(2-u)x+1=x\left[x-(2-u)\right]+1\\
&>&\delta(1+\delta)(2-u)^2+1\geq  1+\delta(1+\delta)
\end{eqnarray}
since $0\leq u\leq 1$. Combining \eqref{omp-supp2} and \eqref{aux-ineq2}, we have for $0\leq t<k-r$ and $j\in {\rm supp}X$, we have
$$\liminf_{n\rightarrow\infty}\max\limits_{j\in {\rm supp}X}\frac{m\|\ab_j^{*}P_
{R(P_{R(A_{I_t})}^{\perp}Y)}\|^2}{r}\geq 1+\delta(1+\delta)$$
for some $\delta>0$. Hence, in the case of $\lim_{n\rightarrow\infty} r/k=\alpha>0$, we can identify the correct indices of ${\rm supp}X$ if we have (\ref{num-somp2}).

\begin{lemma}\label{lem-t}
For $0\leq \gamma\leq 1$ and $0\leq \alpha\leq 1$, we let $0\leq t_{\gamma}(\alpha)\leq 1$ which satisfies 
$$\int_{(1-\gamma)^2}^{(1-\gamma+2\gamma t_{\gamma}(\alpha))^2}ds_{\gamma}(x)=\alpha$$
where $d\lambda_{\gamma}(x)$ is the probability measure which is given by
$$d\lambda_{\gamma}(x):=\frac{1}{\pi \gamma^2}
\frac{\sqrt{((1+\gamma)^2-x)(x-(1-\gamma)^2)}}{x}.$$
Furthermore, we let $d\lambda_{0,\gamma}(x)$ is the probability measure which is given by 
$$d\lambda_{0,\gamma}(x)=\frac{1}{\pi}\frac{\sqrt{4-x}\sqrt{x}}{\gamma x+(1-\gamma)^2}dx.$$
Then we have 
\begin{eqnarray*}
&&\int_0^{\frac{(1-\gamma+2\gamma t_{\gamma}(\alpha))^2-(1-\gamma)^2}{\gamma}}[(1-\gamma)^2+\gamma x]d\lambda_{0,\gamma}(x)\\
&\geq&\int_0^{4t_1(\alpha)^2}[(1-\gamma)^2+\gamma x]d\lambda_1(x).
\end{eqnarray*}
\end{lemma}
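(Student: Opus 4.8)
The plan is to reduce both sides to an integral over the common variable range $[0,4]$ via the substitution already suggested in the main text, $s = (x-(1-\gamma)^2)/\gamma$, and then compare integrands pointwise after matching the upper limits through the defining equations for $t_\gamma(\alpha)$ and $t_1(\alpha)$. First I would carry out the substitution in the left-hand integral: with $x = (1-\gamma)^2 + \gamma s$, the measure $d\lambda_{0,\gamma}$ was precisely constructed so that the density becomes $\tfrac{1}{\pi}\sqrt{4-s}\,\sqrt{s}\,ds$ up to the correct normalizing denominator, and the upper limit $\bigl((1-\gamma+2\gamma t_\gamma(\alpha))^2-(1-\gamma)^2\bigr)/\gamma$ is designed to track the mass $\alpha$. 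The key intermediate claim I would establish is that this transformed upper limit is at least $4t_1(\alpha)^2$; equivalently, that the probability measure $d\lambda_{0,\gamma}$ stochastically dominates $d\lambda_1$ in the sense $\int_0^t d\lambda_1 \le \int_0^t d\lambda_{0,\gamma}$ for all $t\in[0,4]$ (the inequality stated just before \eqref{mpd2} in the main proof), so that matching the mass $\alpha$ under $d\lambda_{0,\gamma}$ requires a smaller upper limit than matching it under $d\lambda_1$ — wait, it is the reverse: since $d\lambda_{0,\gamma}$ puts more mass on $[0,t]$, accumulating mass $\alpha$ happens \emph{sooner}, so I must check the direction carefully and phrase the comparison through the change of variables rather than heuristically.

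Concretely, the comparison of densities $d\lambda_1$ versus $d\lambda_{0,\gamma}$ on $[0,4]$ amounts to checking $\tfrac{1}{2\pi x} \le \tfrac{1}{\pi(\gamma x + (1-\gamma)^2)}$, i.e. $\gamma x + (1-\gamma)^2 \le 2x$ for $x\in[0,4]$. Since the difference $2x - \gamma x - (1-\gamma)^2 = (2-\gamma)x - (1-\gamma)^2$ is increasing in $x$ and nonnegative once $x \ge (1-\gamma)^2/(2-\gamma)$, this inequality fails near $x=0$; so the naive pointwise density domination is false, and the correct statement is only the integrated (CDF) inequality. I would therefore prove the CDF comparison $\int_0^t d\lambda_1(x)\,dx \le \int_0^t d\lambda_{0,\gamma}(x)\,dx$ directly: both sides vanish at $t=0$ and equal $1$ at $t=4$ (after correct normalization), and the difference of densities changes sign exactly once on $(0,4)$ (from negative to positive, by the monotonicity just noted), which forces the difference of CDFs to be $\le 0$ throughout by a single sign-change / unimodality argument. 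This is the standard ``densities crossing once $\Rightarrow$ one CDF dominates'' lemma.

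Given the CDF comparison, the upper limit produced by the substitution from $\int_{(1-\gamma)^2}^{(1-\gamma+2\gamma t_\gamma(\alpha))^2}\!d\lambda_\gamma = \alpha$ satisfies $\bigl((1-\gamma+2\gamma t_\gamma(\alpha))^2-(1-\gamma)^2\bigr)/\gamma \ge 4t_1(\alpha)^2$, because the transformed measure is $d\lambda_{0,\gamma}$ (this identity, $\int_0^{(\cdot)} d\lambda_{0,\gamma} = \int_{(1-\gamma)^2}^{(1-\gamma+2\gamma t)^2} d\lambda_\gamma$, follows from the explicit Jacobian and should be verified as a preliminary computation) and $d\lambda_{0,\gamma}$ needs a larger endpoint than $d\lambda_1$ to collect the same mass $\alpha$. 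Finally, since the integrand $(1-\gamma)^2 + \gamma x \ge 0$ is the same on both sides and the left integral runs over a larger interval with the dominating measure, the desired inequality
\[
\int_0^{\frac{(1-\gamma+2\gamma t_\gamma(\alpha))^2-(1-\gamma)^2}{\gamma}}\!\!\!\bigl[(1-\gamma)^2+\gamma x\bigr]\,d\lambda_{0,\gamma}(x)
\;\ge\;
\int_0^{4t_1(\alpha)^2}\!\!\!\bigl[(1-\gamma)^2+\gamma x\bigr]\,d\lambda_1(x)
\]
follows by splitting off the extra piece of the domain (which contributes a nonnegative amount) and applying the CDF domination to the common part, using that $(1-\gamma)^2+\gamma x$ is nondecreasing so an Abel-summation / integration-by-parts against the CDF difference keeps the sign. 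The main obstacle I anticipate is getting the direction of the stochastic-dominance comparison exactly right and handling the normalization of $d\lambda_{0,\gamma}$ (it is written without an explicit normalizing constant, so one must confirm it is genuinely a probability measure on $[0,4]$, or else carry the constant through); the sign-change argument for the CDFs is routine once the density crossing is pinned down.
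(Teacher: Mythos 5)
Your overall strategy is the paper's: show that the densities of $\lambda_1$ and $\lambda_{0,\gamma}$ cross exactly once on $(0,4)$, deduce a one-sided comparison of their distribution functions, and then compare the two quantile-truncated integrals of the increasing weight $(1-\gamma)^2+\gamma x$. But two things must be repaired before this is a proof. First, the sign you explicitly leave unresolved. Once both measures are normalized as probability measures on $[0,4]$ (as printed, $d\lambda_{0,\gamma}$ integrates to $2$, which is why your density comparison produced the spurious crossing at $(1-\gamma)^2/(2-\gamma)$; the correct comparison is of $1/x$ against $1/(\gamma x+(1-\gamma)^2)$, with crossing at $x=1-\gamma$), the density of $\lambda_1$ dominates near $0$ and is dominated beyond the crossing, so the single-crossing argument yields $\int_0^t d\lambda_{0,\gamma}\le\int_0^t d\lambda_1$ for all $t\in[0,4]$ --- the \emph{opposite} of the inequality you announce you will prove (and of the one printed just before \eqref{mpd2}, which is a sign typo in the paper itself). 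This reversed direction is the one actually needed: it makes the $\alpha$-quantile of $\lambda_{0,\gamma}$ the larger one, giving $\bigl((1-\gamma+2\gamma t_\gamma(\alpha))^2-(1-\gamma)^2\bigr)/\gamma\ \ge\ 4t_1(\alpha)^2$.

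Second, your final assembly --- ``split off the extra piece of the domain and apply the CDF domination to the common part'' --- fails as stated: on the common interval $[0,4t_1(\alpha)^2]$ the measure $\lambda_{0,\gamma}$ carries \emph{less} mass than $\lambda_1$, so already for the constant part of the integrand the common-part comparison goes the wrong way, and the surplus interval is needed to compensate. The mechanism that actually closes the argument (the paper's Lemma \ref{lem-fg}) is the quantile change of variables: writing $F$, $F_{0,\gamma}$ for the two CDFs, $F_{0,\gamma}\le F$ gives $F_{0,\gamma}^{-1}\ge F^{-1}$, hence
\begin{equation*}
\int_0^{F_{0,\gamma}^{-1}(\alpha)} g\,d\lambda_{0,\gamma}=\int_0^{\alpha}g\bigl(F_{0,\gamma}^{-1}(s)\bigr)\,ds\ \ge\ \int_0^{\alpha}g\bigl(F^{-1}(s)\bigr)\,ds=\int_0^{F^{-1}(\alpha)}g\,d\lambda_1
\end{equation*}
for any nonnegative increasing $g$. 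Your integration-by-parts alternative can be made rigorous, but only if you carry the boundary terms and use that both truncated measures have the same total mass $\alpha$; the ``common part plus nonnegative surplus'' decomposition on its own is not sufficient.
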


For the proof of Lemma \ref{lem-t}, we need the following lemma. 

\begin{lemma}\label{lem-fg}
Let $-\infty<a<b<\infty$. Suppose that $f_1(x)$ and $f(x)$ are continouous probability density functions on $[a,b]$ such that for any $t\in [a,b]$, 
$$\int_a^t f(x)dx\geq \int_a^t f_1(x)dx,$$ and satisfy that
$$f_1(x)>0 ~{\rm and}~ f(x)>0 ~{\rm on}~(a,b).$$
Then for any nonnegative increasing function $g(x)$ on $[a,b]$ and for any $(q_1,q)\in [a,b]\times [a,b]$ such that 
\begin{equation}\label{samearea}
\int_a^{q_1}f_1(x)dx=\int_a^q f(x)dx,
\end{equation}
we have 
\begin{eqnarray}\label{fg}
\int_a^{q_1}g(x)f_1(x)dx\geq \int_a^{q}g(x)f(x)dx.
\end{eqnarray}
\end{lemma}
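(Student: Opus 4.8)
The plan is to reduce \eqref{fg} to a pointwise comparison of the quantile (inverse distribution) functions associated with $f_1$ and $f$. Write $F_1(t)=\int_a^t f_1(x)\,dx$ and $F(t)=\int_a^t f(x)\,dx$. Since $f_1,f$ are continuous probability densities on $[a,b]$ that are strictly positive on $(a,b)$, the functions $F_1$ and $F$ are continuous and strictly increasing, with $F_1(a)=F(a)=0$ and $F_1(b)=F(b)=1$; hence each is a homeomorphism of $[a,b]$ onto $[0,1]$ with a continuous, strictly increasing inverse $F_1^{-1},F^{-1}\colon[0,1]\to[a,b]$. The first step is simply to record these structural facts.

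The key observation is that the assumption $F(t)\ge F_1(t)$ for all $t\in[a,b]$ is equivalent to $F^{-1}(s)\le F_1^{-1}(s)$ for all $s\in[0,1]$. Indeed, given $s$, put $t=F^{-1}(s)$; then $F_1(t)\le F(t)=s$, and applying the increasing map $F_1^{-1}$ gives $t\le F_1^{-1}(s)$, i.e. $F^{-1}(s)\le F_1^{-1}(s)$. This is the only place the hypothesis enters.

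Next, let $p$ denote the common value $p:=\int_a^{q_1}f_1(x)\,dx=\int_a^q f(x)\,dx=F_1(q_1)=F(q)$ supplied by \eqref{samearea}. The substitution $s=F_1(x)$ (valid since $F_1$ is $C^1$ with $F_1'=f_1$, and since $g$, being monotone and bounded on $[a,b]$, is Riemann integrable, as is the monotone composition $g\circ F_1^{-1}$ on $[0,p]$) gives
$$\int_a^{q_1}g(x)f_1(x)\,dx=\int_0^{p}g\bigl(F_1^{-1}(s)\bigr)\,ds,$$
and likewise $\int_a^{q}g(x)f(x)\,dx=\int_0^{p}g\bigl(F^{-1}(s)\bigr)\,ds$. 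Because $g$ is increasing and $F_1^{-1}(s)\ge F^{-1}(s)$ for every $s\in[0,1]$, the integrands satisfy $g(F_1^{-1}(s))\ge g(F^{-1}(s))$ pointwise on $[0,p]$, and integrating yields \eqref{fg}.

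The argument is essentially routine once the translation to quantile functions is made, so I do not expect a genuine obstacle. The only points that call for a little care are verifying that $F_1$ and $F$ are true homeomorphisms onto $[0,1]$ — which uses both the positivity of the densities on $(a,b)$ and the normalization $\int_a^b f_1=\int_a^b f=1$ — and justifying the change of variables when $g$ is merely monotone rather than continuous, for which one invokes Riemann integrability of bounded monotone functions together with the stability of monotonicity under composition with the increasing map $F_1^{-1}$. (Nonnegativity of $g$ is in fact not needed for this argument; monotonicity alone suffices.)
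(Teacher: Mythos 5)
Your proof is correct and follows essentially the same route as the paper's: both pass to the quantile functions $F_1^{-1}, F^{-1}$, observe that $F \ge F_1$ forces $F_1^{-1} \ge F^{-1}$, and conclude by the substitution $s = F_1(x)$ (resp.\ $s = F(x)$) together with the monotonicity of $g$. Your write-up merely supplies a few justifications the paper leaves implicit (the homeomorphism claim, integrability of the monotone composition, and the observation that nonnegativity of $g$ is not actually used).
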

\begin{proof}
First, we define 
\begin{equation*}
F_1(x)=\int_a^x f_1(t)dt~{\rm and}~F(x)=\int_a^x f(t)dt.
\end{equation*}
Then both $F_1(x)$ and $F(x)$ are strictly increasing functions so that their inverse functions exist and satisfy 
$F_1^{-1}(x)\geq F^{-1}(x)$ for any $x\in [0,1]$. For any $(q_1,q)\in [a,b]\times[a,b]$ which satisfies \eqref{samearea}, there is some $c\in [0,1]$ such that 
$F_1(q_1)=F(q)=c.$
Applying the change of variable, we have 
\begin{eqnarray*}
&&\int_a^{q_1}g(x)f_1(x)dx-\int_a^q g(x)f(x)dx\\
&=&\int_0^c [g(F_1^{-1}(s))-g(F^{-1}(s))]ds\geq 0
\end{eqnarray*}
since $F_1^{-1}(x)\geq F^{-1}(x)$ for any $x\in [0,1]$ and $g(x)$ is increasing on $[a,b]$.
\end{proof}

{\em Proof of Lemma \ref{lem-t}}
Noting that we have 
\begin{eqnarray*}
\alpha&=&\int_{0}^{\frac{(1-\gamma+2\gamma t_{\gamma}(\alpha))^2-(1-\gamma)^2}{\gamma}}
d\lambda_{0,\gamma}(s)
=\int_0^{4t_1(\alpha)^2}d\lambda_1(s),
\end{eqnarray*}
by Lemma \ref{lem-fg}, we only need to show that 
\begin{equation*}
\int_0^t d\lambda_{0,\gamma}(s)\geq \int_0^t d\lambda_1(s) 
\end{equation*}
for any $t\in [0,4]$.
Let $f_1(x)$ and $f(x)$ be given by
\begin{eqnarray*}
f_1(x)&=&\frac{1}{\pi}\frac{\sqrt{4-x}\sqrt{x}}{\gamma x+(1-\gamma)^2},\\
f(x)&=&\frac{1}{\pi}\frac{\sqrt{4-x}\sqrt{x}}{x}.
\end{eqnarray*}
Then we can see that 
\begin{eqnarray*}
f(x)\geq f_1(x) &&{\rm for}~x\in (0,1-\gamma)\\
~{\rm and}~f_1(x)\geq f(x)&&{\rm for}~x\in [1-\gamma,1].
\end{eqnarray*}
Since $f_1(x)$ and $f(x)$ are probability density functions with support $[0,4]$ so that we can easily see that for any $t\in [0,4]$, 
$$\int_0^t f(x)dx\geq \int_0^t f_1(x)dx$$
so that the claim holds.

\begin{lemma}\label{chi-max}
Suppose that $r$ is a given number, and $\{u_j^{(n)}\}_{j=1}^n$ is a set of i.i.d. chi-squared random variables with degree of freedom $r$. Then 
$$\lim\limits_{n\rightarrow\infty}\max\limits_{j=1,\cdots,n}\frac{u_j^{(n)}}{2\log{n}}=1$$
in probability. 
\end{lemma}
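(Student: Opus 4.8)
The plan is to reduce the claim to a standard first-moment / second-moment argument driven by the tail of a $\chi^2_r$ variate. First I would record the elementary tail estimate: since $u\sim\chi^2_r$ has density proportional to $x^{r/2-1}e^{-x/2}$ on $(0,\infty)$, an integration by parts (equivalently, L'H\^opital applied to the ratio $\Pr(u>t)\big/\big(t^{r/2-1}e^{-t/2}\big)$) produces constants $0<c_1\le c_2<\infty$ and a threshold $t_0$ with
\[
c_1\,t^{r/2-1}e^{-t/2}\ \le\ \Pr(u>t)\ \le\ c_2\,t^{r/2-1}e^{-t/2},\qquad t\ge t_0 .
\]
This two-sided bound is the only analytic input; everything else is bookkeeping with $t$ of order $\log n$.

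For the upper bound (no overshoot) I would fix $\delta>0$, put $t_n:=2(1+\delta)\log n$, and use the union bound:
\[
\Pr\Big(\max_{j\le n}u_j^{(n)}>t_n\Big)\ \le\ n\,\Pr(u>t_n)\ \le\ c_2\,\big(2(1+\delta)\big)^{r/2-1}(\log n)^{r/2-1}\,n^{-\delta}\ \longrightarrow\ 0,
\]
since the polynomial factor $n^{-\delta}$ dominates any power of $\log n$. For the lower bound (no undershoot) I would fix $\delta\in(0,1)$, put $s_n:=2(1-\delta)\log n$ and $p_n:=\Pr(u>s_n)$, and exploit independence together with $1-x\le e^{-x}$:
\[
\Pr\Big(\max_{j\le n}u_j^{(n)}\le s_n\Big)=(1-p_n)^n\le e^{-n p_n},
\]
where the lower tail bound gives $n p_n\ge c_1\big(2(1-\delta)\big)^{r/2-1}(\log n)^{r/2-1}\,n^{\delta}\to\infty$, so this probability also vanishes. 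Combining the two estimates, for every $\delta>0$ one gets $\Pr\big(\,|\max_{j\le n}u_j^{(n)}/(2\log n)-1|>\delta\,\big)\to0$, which is exactly convergence in probability to $1$.

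The argument is routine; the only place that warrants a second glance (the ``main obstacle'', such as it is) is the exponent $r/2-1$ in the prefactor --- it is negative when $r=1$ and positive, growing polylogarithmically, when $r\ge3$ --- but in both regimes the factor $n^{\pm\delta}$ coming from the exponential tail swamps it, so no genuine case distinction is needed. Alternatively one may simply cite that $\chi^2_r$ lies in the Gumbel max-domain of attraction with norming sequence $\sim 2\log n$; I would nonetheless keep the self-contained two-sided tail estimate so that the paper does not lean on external extreme-value theory.
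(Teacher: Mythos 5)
Your proposal is correct and follows essentially the same route as the paper: a union bound with the chi-squared upper tail for the no-overshoot direction, and independence of the $n$ variables with the lower tail for the no-undershoot direction, both driven by the asymptotic $\Pr(u>t)\asymp t^{r/2-1}e^{-t/2}$. If anything, your version is slightly more careful than the paper's (explicit two-sided constants $c_1,c_2$ and the bound $(1-p_n)^n\le e^{-np_n}$, where the paper works directly with the asymptotic equivalence), but the argument is the same.
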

\begin{proof}
Assume that $Z_r$ is a chi-squared random variable of degree of $r$, then we have 
\begin{equation}\label{gamma-tail}
P\{Z_r>x\}=\frac{\Gamma(r/2,x/2)}{\Gamma(r/2)},
\end{equation}
where $\Gamma(k,z)$ denotes the upper incomplete Gamma function. Then we use the following asymptotic behavior :
$$P\{Z_r>x\}\sim \frac{1}{\Gamma(r/2)}x^{r/2-1}e^{-x/2}
~{\rm as}~ x\rightarrow \infty.$$ For $n\rightarrow\infty$, we consider the probability
$P(\max_{1\leq j\leq n}u_j^{(n)}>2(1+\epsilon)\log{n})$. By using union bound, we see that 
\begin{eqnarray*}
&&P(\max_{1\leq j\leq n}u_j^{(n)}>2(1+\epsilon)\log{n})\\
&\leq&n\frac{1}{\Gamma(r/2)}(2(1+\epsilon)\log{n})^{r/2-1}e^{-(1+\epsilon)\log{n}}\\
&\leq&\frac{1}{\Gamma(r/2)}(2(1+\epsilon)\log{n})^{r/2-1}n^{-\epsilon}\rightarrow 0
\end{eqnarray*}
as $n\rightarrow\infty$. Now, considering the probability $P(\max_{1\leq j\leq n}u_j^{(n)}<2(1-\epsilon)\log{n})$, we see that
\begin{eqnarray*}
&&P(\max_{1\leq j\leq n}u_j^{(n)}<2(1+\epsilon)\log{n})\\
&\leq&\left(1-\frac{1}{\Gamma(r/2)}(2(1-\epsilon)\log{n})^{r/2-1}e^{-(1-\epsilon)\log{n}}\right)^n\\
&\leq&\left(1-\frac{1}{\Gamma(r/2)}(2(1-\epsilon)\log{n})^{r/2-1}\frac{1}{n^{1-\epsilon}}\right)^n\rightarrow 0
\end{eqnarray*}
as $n\rightarrow\infty$ so that the claim is proved. 
\end{proof}

\section*{Appendix B: Proof of Theorem \ref{ML-suffcond}}
The proof of Theorem \ref{ML-suffcond} basically follows the line from \cite{Reeves2008}, which provides us the information theoretic analysis for partial support recovery with maximum likelihood(ML) estimator. Let $P_e(\alpha)$ be the error probability conditioned on the true support set $S$ with fractional distortion $\alpha$. Since the sampling procedure is independent from $S$ so that for any distribution over $S$, we have $P_e(\alpha)=P_e(\alpha|S).$ Consider the sets 
\begin{eqnarray*}
G&=&\{U:|U|=k,|U\cap S|>(1-\alpha)k\},\\
B&=&\{U:|U|=k,|U\cap S|\leq (1-\alpha)k\}.
\end{eqnarray*}
Let ${\rm err}(U)=(1/\sigma_w^2)\|P_{R(A_U)}^{\perp}Y\|_F^2.$ For any $t>0$, we define two events
$$A_B=\{\min\limits_{U\in B}{\rm err}(U)<t\},~A_G=\{U:\min\limits_{V\in G}{\rm err}(V)>t\}.$$
Then $P_e(\alpha|S)\leq P(A_B)+P(A_G).$

 First, if we noting that $\min_{V\in G}{\rm err}(V)
\geq {\rm err}(K)$, we have $P(A_G)\leq P({\rm err}(K)>t)$. Since $N$ has zero mean i.i.d. Gaussian columns, and $P_{R(A_S)}^{\perp}$ is an orthogonal projection matrix with rank $m-k$, the random variable ${\rm err}(K)=(1/\sigma_w^2)\|P_{R(A_S)}^{\perp}N\|_F^2$ has a chi-squared distribution with degree of freedom $r(m-k)$ since $r$ columns of $N$ are independent.

Second, we consider $P(A_B)$. We partition $B$ by $B=\cup_{a=a_{*}}^{a^{*}}B_a$ where 
$$\tilde{B}(a)=\{U:|U|=k,|U\cap S|=k-a\},$$
$a_{*}=\lfloor \alpha k\rfloor$ and $a^{*}=\lceil (1-\epsilon)k\rceil$. Then 
$$P(A_B)\leq \sum\limits_{a=a_{*}}^{a^{*}}P(A_{\tilde{B}(a)})$$
where 
$$A_{\tilde{B}(a)}=\left\{\min\limits_{U\in \tilde{B}(a)}{\rm err}(U)<t\right\}.$$ 
Then we need to quantify the distribution of ${\rm err}(U)$ for $U\in \tilde{B}(a)$. First, if we condition on the set $S\setminus U$, the magnitude of the missed components of $X$ is given by ${\sf SNR}(X_{S\setminus U}).$ Furthermore, for any $U$, $\Lambda(U):=(1/\sigma_w^2)\|P_{R(A_U)}^{\perp}N\|_F^2$ is a chi-squared random variable with $r(m-k)$ degree of freedom by the independency of each column of $N$. Conditioned on ${\sf SNR}(X^{S\setminus U})=\theta$, the random vector 
$(\sigma_w^2\theta)^{-1/2}A_{S\setminus U}\xb^{S\setminus U}_j$ has iid zero mean Gaussian random elements with variance 1, where $\xb_j$ is the $j$-th column of $X$. If we also add an another condition $\Lambda(U)=\lambda$, then we see that 
$$\frac{1}{\theta}{\rm err}(U)=\frac{1}{\sigma_w^2\theta}\|P_{R(A_S)}^{\perp}(A_{S\setminus U}X^{S\setminus U}+N)\|_F^2$$
is a non-central chi-squared random variable with non-centrality parameter $\lambda/\theta$ and degree of freedom $r(m-k)$. This implies that 
\begin{eqnarray*}
&&P\{{\rm err}(U)<t|{\sf SNR}(X_{S\setminus U})=\theta, \Lambda(U)=\lambda\}
\\&=&P\{\chi_{NC}^2(r(m-k),\lambda/\theta)<t/\theta\}.
\end{eqnarray*}  
By the Lemma A.3 in \cite{Reeves2008}, since $\Lambda(U)\geq 0$, we have 
\begin{eqnarray*}
&&P\{{\rm err}(U)<t|{\sf SNR}(X_{S\setminus U})=\theta\}
\\&\leq&P\{\chi^2(r(m-k))<t/\theta\}
\end{eqnarray*}
using $\chi_{NC}^2(r(m-k),0)=\chi^2(r(m-k))$. Hence we have  
\begin{eqnarray*}
&&P\{{\rm err}(U)<t|{\sf SNR}(X_{S\setminus U}\geq\theta)\}
\\&\leq&P\{\chi^2(r(m-k))<t/\theta\}.
\end{eqnarray*} Then 
\begin{eqnarray*}
P(A_{\tilde{B}(a)})\leq P({\sf SNR}<\theta)+\sum\limits_{U\in \tilde{B}(a)}
P\{\chi^2(r(m-k))<t/\theta\}.
\end{eqnarray*}
By the definition of ${\sf SNR}$ and $g(a/k,X)$, 
$$\frac{{\sf SNR}(X_{S\setminus U})}{\alpha g(a/k,X)}\geq {\sf SNR}(X).$$
By the definition of $g(a/k,\mathcal{X})$ and ${\sf SNR}(\mathcal{X})$, there is a $c_0>0$ such that 
$$P\{\min\limits_{U\in\tilde{B}(a)}{\sf SNR}(X_{S\setminus U})<1/\zeta(a)\}<e^{-nc_0}$$
where $\zeta(a)=[{\sf SNR}(\mathcal{X})(a/k) g(a/k,\mathcal{X})]^{-1}$. Hence 
\begin{eqnarray*}
&&P(A_{\tilde{B}(a)})<e^{-nc_0}+\sum\limits_{U\in \tilde{B}(a)}P\{
\chi^2(r(m-k))<\zeta(a)t\}\\
&=&e^{-nc_0}+{k \choose a}{n-k \choose a}P\{\chi^2(r(m-k))<\zeta(a)t\}.
\end{eqnarray*}
Reminding $P_e(\alpha)\leq P(A_G)+P(A_B)$, we first bound $P(A_G)$.
For arbitrary $\nu>0$, we choose $t_{\nu}=(1+\nu)r(m-k)$. Then by Lemma \ref{chi-bound}, we have 
$$P\{\chi^2(r(m-k))>t_{\nu}\}\leq \exp{(-nE_1)},$$
where $E_1:=(\rho-\epsilon)\nu^2/4$. With $\nu$ arbitrary close to 0, we consider the probability $P(A_B)$. To use Lemma \ref{chi-bound}, we need the condition 
$${\sf SNR}(\mathcal{X})>\max\limits_{u\in [\alpha,1-\epsilon]}\frac{1}{ug(u,\mathcal{X})}=\frac{1}{\alpha g(\alpha,\mathcal{X})}.$$
If this condition is satisfied, then 
$$P\{\chi^2(r(m-k))<\tau(a)t\}\leq e^{-nE_2(a)}$$
where 
$$E_2(a)=\frac{\rho-\epsilon}{2}\left[-\log(\zeta(a))+\zeta(a)-1\right].$$

Finally noting that $\log{{k \choose a}{n-k \choose a}}\rightarrow nh(\epsilon,a/k)$ as $n\rightarrow\infty$, we have 
\begin{eqnarray*}
P_e(\alpha)&\leq& e^{-nE_1}+\sum\limits_{a=a_{*}}^{a^{*}}\left[
e^{-n(E_2(a)-h(\epsilon,a/k))}+e^{-nc_0}\right]\\
&<&e^{-nE_1}+\max\limits_{\alpha k\leq a\leq (1-\epsilon)k}
e^{-n(E_2(a)-h(\epsilon,a/k))+\log{k}}\\
&&+e^{-nc_0+\log{k}}.
\end{eqnarray*}
For $n\rightarrow\infty$, the ML estimator is asymptotically reliable if we have SNR condition (\ref{par-snr}) and $E_2(a)>h(\epsilon,a/k)$ for $\alpha k\leq a\leq (1-\epsilon)k$ which holds under the condition (\ref{par-sam}).

\begin{lemma}\cite{Reeves2008}\label{chi-bound}
For positive integer $r$ and random variable $Z$ which has the distribution $\chi^2(r)$ and for any $\epsilon>0$ we have 
\begin{eqnarray*}
P\{Z>(1+\epsilon)r\}&\leq& e^{-\frac{r}{4}\epsilon^2},\\
P\{Z<(1-\epsilon)r\}&\leq & \exp{\left(-\frac{r}{2}\left[
-\log{(1-\epsilon)}-\epsilon\right]\right)}.
\end{eqnarray*}
\end{lemma}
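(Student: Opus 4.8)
The plan is to treat both inequalities as Chernoff (exponential Markov) bounds driven by the moment generating function of a chi-squared variable. Writing $Z=\sum_{i=1}^{r}Y_i^2$ with $Y_i$ i.i.d.\ standard normal, the key input is ${\sf E}[e^{sZ}]=(1-2s)^{-r/2}$, valid for $s<1/2$, together with the companion ${\sf E}[e^{-sZ}]=(1+2s)^{-r/2}$, valid for $s>0$. Each tail is then handled separately by applying Markov's inequality to the appropriate exponential of $Z$ and optimizing the free parameter $s$.

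For the upper tail I would fix $s\in(0,1/2)$ and write
\[
P\{Z>(1+\epsilon)r\}\leq e^{-s(1+\epsilon)r}\,{\sf E}[e^{sZ}]=\exp\!\left(-s(1+\epsilon)r-\tfrac{r}{2}\log(1-2s)\right).
\]
Minimizing the exponent over $s$ produces the stationary point $s^{*}=\epsilon/(2(1+\epsilon))$, and substituting back gives the sharp exponent $\tfrac{r}{2}\bigl(\log(1+\epsilon)-\epsilon\bigr)$. To reach the stated Gaussian-type form $e^{-r\epsilon^2/4}$, I would then compare this exponent with $-\tfrac{r}{4}\epsilon^2$ via an elementary scalar inequality relating $\epsilon-\log(1+\epsilon)$ and $\epsilon^2/2$.

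For the lower tail the same recipe applies to $e^{-sZ}$: for $s>0$,
\[
P\{Z<(1-\epsilon)r\}\leq e^{s(1-\epsilon)r}\,{\sf E}[e^{-sZ}]=\exp\!\left(s(1-\epsilon)r-\tfrac{r}{2}\log(1+2s)\right).
\]
Here the minimizer is $s^{*}=\epsilon/(2(1-\epsilon))$ (which forces the harmless restriction $\epsilon<1$), and substituting back yields the exponent $-\tfrac{r}{2}\bigl(-\log(1-\epsilon)-\epsilon\bigr)$. This already coincides with the claimed bound, so in the lower-tail case the optimized Chernoff estimate is in closed form and no auxiliary inequality is required.

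The main obstacle is precisely the last simplification in the upper-tail case. Everything up to the sharp exponent $\tfrac{r}{2}(\log(1+\epsilon)-\epsilon)$ is routine convex optimization, but converting it into the clean quadratic exponent $r\epsilon^2/4$ hinges on controlling $\epsilon-\log(1+\epsilon)$ from below by $\epsilon^2/2$. I would attempt this by introducing $\phi(\epsilon)=\epsilon-\log(1+\epsilon)-\tfrac{\epsilon^2}{2}$ with $\phi(0)=0$ and examining $\phi'(\epsilon)=\epsilon/(1+\epsilon)-\epsilon=-\epsilon^2/(1+\epsilon)$; the sign of $\phi'$ is exactly what decides whether the stated constant $1/4$ survives and over which range of $\epsilon$. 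This monotonicity analysis is where the argument must be carried out with care, since the admissible constant is governed precisely by how tightly $\log(1+\epsilon)$ can be controlled near $\epsilon=0$. Once this scalar comparison is settled, both displayed tail bounds follow at once from the optimized exponents derived above.
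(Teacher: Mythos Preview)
The paper does not supply its own proof of this lemma; it is quoted from \cite{Reeves2008} and used as a black box in Appendix~B. So there is nothing in the paper to compare against, and the relevant question is whether your Chernoff argument actually yields the two displayed inequalities.

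For the lower tail your computation is complete and correct: optimizing $s$ gives exactly the stated exponent $\tfrac{r}{2}\bigl(-\log(1-\epsilon)-\epsilon\bigr)$, with the natural restriction $\epsilon\in(0,1)$.

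The upper tail is where the genuine gap lies, and your own derivative calculation already exposes it. You correctly obtain the optimal Chernoff exponent $\tfrac{r}{2}\bigl(\epsilon-\log(1+\epsilon)\bigr)$ and then ask whether $\epsilon-\log(1+\epsilon)\geq \epsilon^{2}/2$. But your computation $\phi'(\epsilon)=-\epsilon^{2}/(1+\epsilon)<0$ together with $\phi(0)=0$ shows that in fact
\[
\epsilon-\log(1+\epsilon)<\tfrac{\epsilon^{2}}{2}\qquad\text{for every }\epsilon>0.
\]
Hence the optimized Chernoff bound is \emph{strictly larger} than $e^{-r\epsilon^{2}/4}$, and since the optimized $s$ already gives the best exponential rate, no exponential-Markov argument can produce the constant $1/4$. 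The stated inequality is in fact false as written ``for any $\epsilon>0$'': with $r=1$ and $\epsilon=10$ one has $P\{Z>11\}=P\{|Y|>\sqrt{11}\}\approx 9\times 10^{-4}$ for $Y\sim\mathcal{N}(0,1)$, while $e^{-25}\approx 1.4\times 10^{-11}$.

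This does not affect how the lemma is used in Appendix~B, where the upper-tail bound is invoked only with the deviation parameter taken arbitrarily close to $0$. In that regime the unsimplified Chernoff exponent $\tfrac{r}{2}\bigl(\epsilon-\log(1+\epsilon)\bigr)\sim r\epsilon^{2}/4$ is entirely adequate. The clean fix is therefore either to keep the sharp exponent $\tfrac{r}{2}\bigl(\epsilon-\log(1+\epsilon)\bigr)$ in place of $r\epsilon^{2}/4$, or to restrict to $\epsilon\in(0,1]$ and replace $1/4$ by $\tfrac{1}{2}(1-\log 2)$, which is the minimum of $\bigl(\epsilon-\log(1+\epsilon)\bigr)/\epsilon^{2}$ on that interval.
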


\section*{Appendix C: Proof of Theorem \ref{nec-cond}}
Again, the proof of Theorem \ref{nec-cond} is a generalization of the result for the necessary condition in \cite{Reeves2008} to the MMV cases. 

We define $Z=X_S$. Then the pair $(S,Z)$ is equivalent to $X$. By the data processing inequality and the chain rule for mutual information, we have 
\begin{eqnarray}\label{nec-1}
I(B;Y)\geq I(Z,S;Y)=I(S;Y)+I(Z;Y|S) 
\end{eqnarray}
where $B=AX$ is the noiseless measurement. Since the noise $N$ is i.i.d. Gaussian with covariance matrix $\sigma_w^2 I$ and 
$$E((AX+N)^{*}(AX+N))=X^{*}X+\sigma_w^2 I$$
for a given $X$, we can obtain an upper bound of $I(B;Y)$ as 
\begin{eqnarray}\label{nec-2}
\notag I(B;Y)&\leq& \frac{1}{2}\log{[\det{(I+\frac{1}{\sigma_w^2} X^{*}X)}]}\\
\notag &=& \sum\limits_{l=1}^r \frac{1}{2}\log{(1+\frac{1}{\sigma_w^2}\lambda_l(X^{*}X))}\\
&\leq& \sum\limits_{l=1}^r \frac{1}{2}\log{(1+\frac{1}{\sigma_w^2}\kappa_l(\mathcal{X}))}
\end{eqnarray}
asymptotically since $re^{-cn}\rightarrow 0$ for $n\rightarrow\infty$, where $\lambda_l(X^{*}X)$ is the $l$-th largest eigenvalue of $X^{*}X$ for $1\leq l\leq r$.

Then, we consider the information $I(S,Y)$. Given that $S$ is uniformly chosen over $n \choose k$ possibilities, the asymptotic number of bits we need to decode $S$ to with distortion rate $\alpha$ is given by $nh(\epsilon)-nh(\epsilon,\alpha)$, where we used 
$\log{n \choose k}=nh(k/n)+\mathcal{O}(\log{n})$. By Fano's inequality, $P_e(\alpha)=0$ only if 
\begin{equation}\label{nec-3}
I(S;Y)\geq nh(\epsilon)-nh(\epsilon,\alpha).
\end{equation}
Applying \eqref{nec-1}, \eqref{nec-2} and \eqref{nec-3} we have 
\begin{equation*}
m\geq \frac{nh(\epsilon)-nh(\epsilon,\alpha)+I(Z;Y|S)}{\sum\limits_{l=1}^r
\frac{1}{2}\log{[1+\frac{1}{\sigma_w^2}\kappa_l(\mathcal{X})]}}
\end{equation*}

\end{document}